\documentclass[a4paper,12pt]{article}

\usepackage{complexity} % For symbols in complexity class.
\usepackage{enumerate}
\usepackage{amsmath, amssymb, amsthm,complexity}
\usepackage{authblk}
\usepackage{todonotes}

%\usepackage{fontspec}
%\setmainfont{Times}
%\def\margin{2.9cm}
%\usepackage[left=\margin,right=\margin,top=\margin,bottom=\margin]{geometry}
\usepackage{vmargin}
\setmarginsrb{1in}{1in}{1in}{1in}{0mm}{0mm}{0mm}{7mm}

\setlength{\marginparwidth}{2cm}

\usepackage{mathtools}
\usepackage{thmtools}
\usepackage{thm-restate}
\usepackage{hyperref}
\usepackage{cleveref}
\usepackage{complexity}
\usepackage{authblk}
\theoremstyle{plain}
\usepackage{microtype}
\usepackage{multirow}
\usepackage{amsmath,amssymb}
\usepackage{comment}
\usepackage{enumerate}
\usepackage{xspace}
\usepackage{listings}
\usepackage{color}
\usepackage{cite}
\usepackage{graphicx}
\RequirePackage{fancyhdr}
 \usepackage{xcolor}
\usepackage{boxedminipage}

\newcommand{\mypara}[1]{\smallskip\noindent{\textbf{\sffamily #1} \ }}

%\newcommand{\ggg}{{\mathfrak g}}
%\newcommand{\hhh}{{\mathfrak h}}

%\newcommand{\lll}{{\mathfrak l}}

% Paper specific macros.

%\newcommand{\peq}{{\sc $p$-eq-CMSO}}

\newcommand{\tw}{{\normalfont \textsf{tw}}}

\newcommand{\cw}{{\normalfont \textsf{cw}}}

\newcommand{\hG}{{\hat G}}
\newcommand{\hu}{{\hat u}}

\newcommand{\cvctreewidth}{{\sc CVC($\eta$-transversal)}}
\newcommand{\cvcchordal}{{\sc CVC({\sc  Chordal-Del})}}
\newcommand{\cvcGdeletion}{{\sc CVC(${\cG}$-Deletion)}}

\newcommand{\cO}{{\cal O}}

\newcommand{\cB}{{\mathcal{B}}}

\newcommand{\cD}{\mathcal{D}}

\newcommand{\cH}{\mathcal{H}}
\newcommand{\cG}{\mathcal{G}}

\newcommand{\cT}{\mathcal{T}}

\newcommand{\OPT}{\mbox{\rm OPT}}

\newcommand{\eps}{\varepsilon}
%\newcommand{\prob}[1]{{\it #1}}

%\newcommand{\algo}[1]{{\sc #1}}

%\newcommand{\K}{\mathcal{K}}
%\newcommand{\PP}{\mathcal{P}}
%\newcommand{\C}{\mathcal{C}}

%\newcommand{\G}{\mathcal{G}}

%%
%\newcommand{\R}{\mathbb{R}}

%\newcommand{\NP}{\text{\normalfont  NP}}
%\newcommand{\coNP}{\text{\normalfont co-NP}}
%\newcommand{\FPT}{\text{\normalfont FPT}}
%\newcommand{\XP}{\text{\normalfont XP}}
%\newcommand{\W}[1][xxxx]{\text{\normalfont W[#1]}}

%\newcommand{\NPC}{\textsf{NP}-complete}

%\newtheorem*{theorem*}{Theorem}

%\newtheorem{theorem}{Theorem}
%\newtheorem{lemma}{Lemma}[section]
%\newtheorem{claim}{Claim}[section]
%\newtheorem{corollary}{Corollary}
%\newtheorem{definition}{Definition}[section]
%\newtheorem{observation}{Observation}[section]
%\newtheorem{proposition}{Proposition}[section]
%\newcounter{reduction}
%\setcounter{reduction}{0}
%\newcounter{markingrule}

%\setcounter{markingrule}{0}
%
%\newtheorem{redr}[reduction]{Reduction Rule}
%\newtheorem{marule}[markingrule]{Marking Rule}

%\newcommand{\BB}{{\mathcal B}}
\newcommand{\CC}{{\mathcal C}}
\newcommand{\GG}{{\mathcal G}}

\newcommand{\shortversion}[1]{}

\newcommand{\CVC}{{\sc Connected Vertex Cover}}

\newcommand{\OO}{{\mathcal O}}
\newcommand{\bigoh}{{\mathcal O}}

%\newcommand{\PP}{\ensuremath{\mathcal{P}}\xspace}

%
%\newcommand{\compass}{\NP\subseteq \coNP/\poly}
%\newcommand{\ncompass}{\NP\nsubseteq \coNP/\poly}

%\input{probenv_conn_f_del.tex}

% Centering without extra whitespace
\newenvironment{tightcenter}
 {\parskip=0pt\par\nopagebreak\centering}
 {\par\noindent\ignorespacesafterend}

\usepackage{tikz}
\usetikzlibrary{calc}
\usepackage{xargs}
\usepackage{xifthen}
\usepackage{framed}

\usepackage{ctable}

% Main def
\newlength{\RoundedBoxWidth}
\newsavebox{\GrayRoundedBox}
\newenvironment{GrayBox}[1]%
   {\setlength{\RoundedBoxWidth}{\textwidth-4.5ex}
    \def\boxheading{#1}
    \begin{lrbox}{\GrayRoundedBox}
       \begin{minipage}{\RoundedBoxWidth}%
   }{%
       \end{minipage}
    \end{lrbox}%
    \begin{tightcenter}%
    \begin{tikzpicture}%
       \node(Text)[draw=black!60,fill=white,rounded corners,%
             inner sep=2ex,text width=\RoundedBoxWidth]%
             {\usebox{\GrayRoundedBox}};
        \coordinate(x) at (current bounding box.north west);
        \node [draw=white,rectangle,inner sep=3pt,anchor=north west,fill=white] 
        at ($(x)+(6pt,.75em)$) {\boxheading};
    \end{tikzpicture}
    \end{tightcenter}\vspace{0pt}%
    \ignorespacesafterend
}    

\newenvironment{problem}[2][]{\noindent\ignorespaces%
                                \FrameSep=6pt%
                                \parindent=0pt%
                \vspace*{-.5em}
                \ifthenelse{\isempty{#1}}{%
                  \begin{GrayBox}{\textsc{#2}}%                
                }{%
%                  \begin{GrayBox}{\textsc{#2} parameterized by~{}}%  
                }
                \newcommand\Prob{{\sf Problem:}}%
                \newcommand\Input{{\sf Input:}}%
                \newcommand\Parameter{{\sf Parameter:} }          
                \begin{tabular*}{\textwidth}{@{\hspace{.1em}} >{\itshape} p{1.2cm} p{0.85\textwidth} @{}}%     
            }{
                \end{tabular*}%
                \end{GrayBox}%
                \vspace*{-.5em}
                \ignorespacesafterend
            } 

\DeclarePairedDelimiter{\ceil}{\lceil}{\rceil}

\newtheorem{theorem}{\bf Theorem}
\newtheorem{definition}{\bf Definition}

\newtheorem{observation}{\bf Observation}
\newtheorem{reduction rule}{\bf Reduction Rule}
\newtheorem{lemma}{\bf Lemma}

\pagestyle{plain}

\bibliographystyle{plain}

\title{On the Approximate Compressibility of Connected Vertex Cover\footnote{A part of this work was done when the first author was affiliated to The Institute of Mathematical Sciences, HBNI, Chennai, India.}}

\date{}
\author[1]{Diptapriyo Majumdar\footnote{Corresponding author.}}
\author[2]{M. S. Ramanujan}
\author[3]{Saket Saurabh}
\affil[1]{Royal Holloway, University of London, United Kingdom
    \texttt{diptapriyo.majumdar@rhul.ac.uk}}
\affil[2]{University of Warwick, United Kingdom
\texttt{R.Maadapuzhi-Sridharan@warwick.ac.uk}}
\affil[3]{The Institute of Mathematical Sciences, HBNI, Chennai, India
  \texttt{saket@imsc.res.in}}

\begin{document}

\maketitle

\begin{abstract}
The {\CVC} problem, where the goal is to compute a minimum set of vertices in a given graph which forms a vertex cover and induces a connected subgraph, is a fundamental combinatorial problem and has received extensive attention in various subdomains of algorithmics. In the area of kernelization, it is known that this problem is unlikely to have a polynomial kernelization algorithm. However, it has been shown in a recent work of Lokshtanov et al.~[STOC 2017] that if one considered an appropriate notion of {\em approximate kernelization}, then this problem parameterized by the solution size does admit an {\em approximate} polynomial kernelization. In fact, Lokshtanov et al.~were able to obtain a polynomial size approximate kernelization scheme (PSAKS) for {\CVC} parameterized by the solution size. A PSAKS is essentially a preprocessing algorithm whose error can be made arbitrarily close to 0.

In this paper we revisit this problem, and consider parameters that are {\em strictly smaller} than the size of the solution and obtain the first polynomial size approximate kernelization schemes for the {\CVC} problem when parameterized by the deletion distance of the input graph to the class of cographs, the class of bounded treewidth graphs, and the class of all chordal graphs.
\end{abstract}

\section{Introduction}
\label{sec:intro}

Polynomial-time preprocessing is one of the widely used methods to tackle NP-hardness in practice, and {\em kernelization} has been extremely successful in laying down a mathematical framework for the design and rigorous analysis of preprocessing algorithms for decision problems. 
The central notion in kernelization is that of a \emph{kernelization}, which is a preprocessing algorithm that takes as input a parameterized problem, which is a pair $(I,k)$, where $I$ is the problem instance and $k$ is an integer called the {\em parameter}. A kernelization is required to run in polynomial-time and convert a potentially large input $(I,k)$ into an equivalent instance $(I',k')$ such that $|I'|$ and $k'$ are both bounded by a function of the parameter $k$. Over the last decade, the area of kernelization has seen the development of a wide range of tools to design preprocessing algorithms and a rich theory of lower bounds has been developed based on assumptions from complexity theory~\cite{BodlaenderDFH09,DellM10,BJK11,FortnowS11,DellM12,Drucker12,HermelinW12,DomLS14,HermelinKSWW15,kernel-book}. We refer the reader to the survey articles by Kratsch~\cite{Kratsch14} or Lokshtanov et al.~\cite{LMS2012} for relatively recent developments, or the textbooks~\cite{CyganFKLMPPS15,DF2012Book}, for an introduction to the field. 

An `efficient preprocessing algorithm' in this setting is referred to as a \emph{polynomial kernelization} and is simply a kernelization whose output has size polynomially upper bounded in the value of the parameter of the input. The central classification task in the area is to classify each NP-hard problem as one which has a polynomial kernel, or as one that does not. 

The {\sc Vertex Cover} problem is one of the most frequently studied problems from the point of view of kernelization and buoyed by the rich literature on kernelization for {\sc Vertex Cover} parameterized by the solution size, researchers have more recently turned their attention to the design of kernelization algorithms for {\sc Vertex Cover} parameterized by smaller parameters. The results most relevant to us in this line of enquiry are the polynomial kernelization given by Jansen and Bodlaender~\cite{JansenB13} for {\sc Vertex Cover} parameterized by the size of the feedback vertex set and the result of Cygan et al.~\cite{CyganLPPS14}, in which they showed that {\sc Vertex cover} is unlikely to have a polynomial kernelization when parameterized by the deletion distance of the given graph to the class of graphs of treewidth at most $\eta$, for any $\eta>1$. Here, the deletion distance of the given graph $G$ to any graph class $\cG$ is the size of the smallest set $S\subseteq V(G)$ such that $G-S\in \cG$.

On the other hand, the {\CVC} problem, where the solution is also required to induce a connected subgraph of the input graph, is known to exclude a polynomial kernelization already when parameterized by the solution size, under standard complexity-theoretic hypotheses~\cite{DomLS14}. However, the study of preprocessing for this problem was handed a new lease of life by the recent work of Lokshtanov et al.~\cite{LokshtanovPRS17}, who aimed to facilitate the rigorous analysis of preprocessing algorithms in conjunction with approximation algorithms via the introduction of {\em $\alpha$-approximate kernels}. 

\begin{quote}
Informally speaking, an $\alpha$-approximate kernel is a polynomial-time algorithm that, given an instance $(I,k)$ of a parameterized problem, outputs an instance $(I',k')$ such that $|I'|+ k' \leq g(k)$ for some computable function $g$ and any $c$-approximate solution to the instance $(I',k')$ can be turned in polynomial-time into a $(c \cdot \alpha)$-approximate solution to the original instance $(I,k)$. 
\end{quote}

As earlier, the notion of `efficiency' in this context is captured by the function $g$ being polynomially upper bounded, in which case we call this algorithm an {\em $\alpha$-approximate polynomial kernelization}. 
We refer the reader to Section~\ref{sec:prelims} for a formal definition of all terms related to (approximate) kernelization. 

In their work, Lokshtanov et al.~\cite{LokshtanovPRS17} considered several problems which are known to exclude polynomial kernels and presented an $\alpha$-approximate polynomial kernel for these problems for {\em every} fixed $\alpha>1$, also called a polynomial size approximate kernelization scheme (PSAKS, see Section~\ref{sec:prelims} for formal definition). This implies that allowing for an arbitrarily small amount of error while preprocessing can drastically improve the extent to which the input instance can be reduced, even when dealing with problems for which polynomial kernels have been ruled out under the existing theory of lower bounds.

\begin{figure}[t]
\centering
    \includegraphics[scale=0.30]{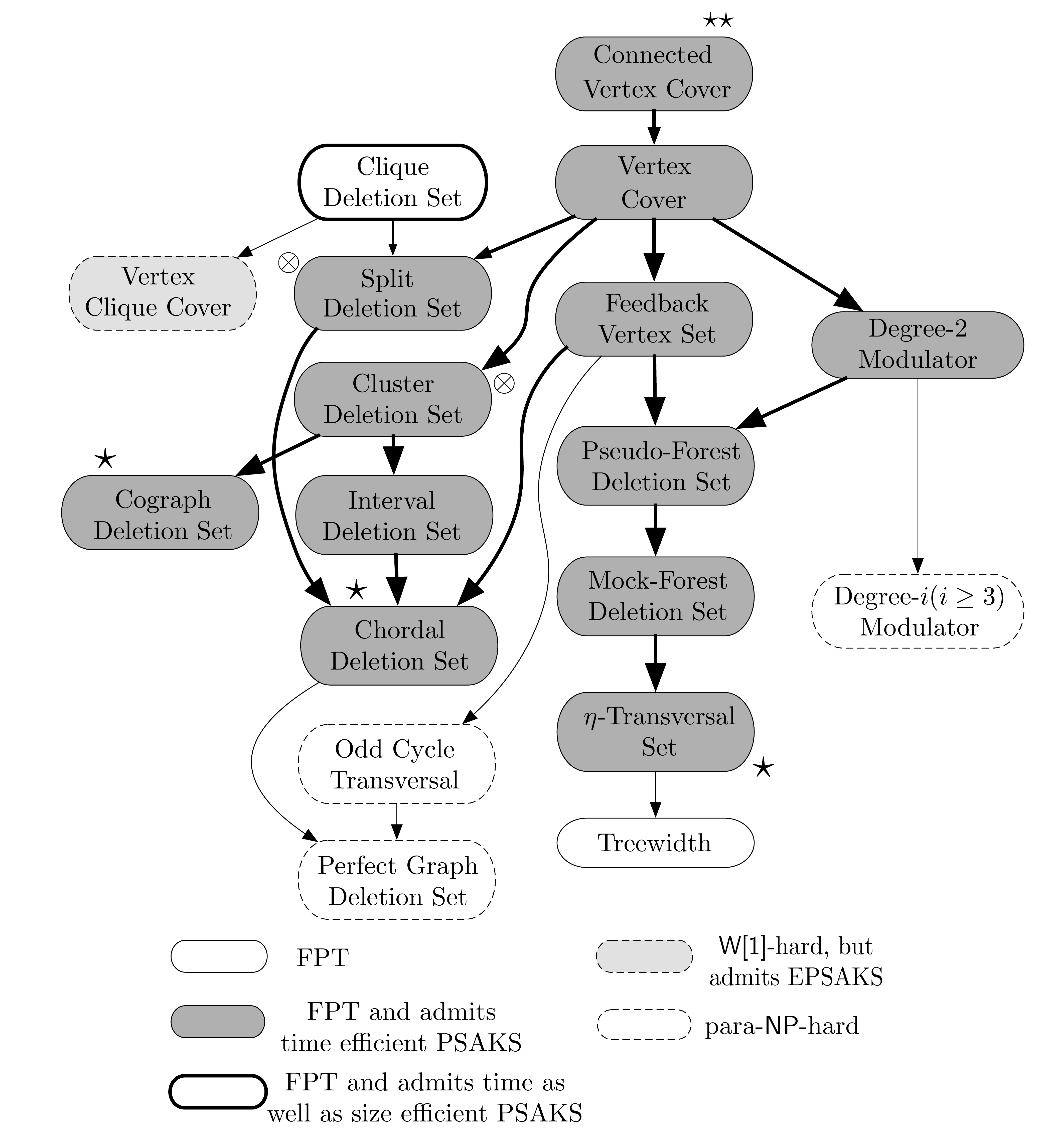}
    \caption{Hierarchy of Parameters. An arrow from parameter $x$ to parameter $y$ means $y \leq f(x)$ for some polynomial $f$. Results marked $\star$ indicates the ones considered in this paper. Results marked $\star\star$ appeared in~\cite{LokshtanovPRS17}, and marked $\otimes$ appeared in~\cite{KMR18}. See Section~\ref{sec:prelims} and~\cite{LokshtanovPRS17} for definitions.}
    \label{fig:parameter-hierarchy}
\end{figure} 

In particular, they showed that the {\sc Connected Vertex Cover} problem admits an $\alpha$-approximate polynomial kernel for every $\alpha>1$. We believe that their result provides a promising starting point towards a comprehensive study on approximate kernelizations for {\CVC}, with the aim being the replication of the success enjoyed by {\sc Vertex Cover} in the domain of kernelization. Consequently, we consider the question of designing $\alpha$-approximate kernelizations for {\CVC} in a systematic manner by considering as our parameter, the deletion distance of the given graph to well understood super classes of edgeless graphs. We point out that we are not the first to attempt this. Krithika et al.~\cite{KMR18} obtained a PSAKS for this problem when parameterized by the deletion distance of the input graph to the class of split graphs.
Our results generalize their result and also provide unified approximate kernelizations for this problem with respect to several parameters, including the deletion distance of the input graph to the class of split graphs and cographs.

\subsection{Our results and significance of the chosen parameterizations }
The parameters we consider in this paper are the deletion distances of the input graph to 
(a) bounded treewidth graphs, (b) split graphs or cographs, and (c) chordal graphs.
Since an edgeless graph is contained in all of these graph classes, it follows that all of our parameters are {\em upper bounded} by the minimum vertex cover (note the removal of the connectivity requirement) in any given graph. Clearly, the size of the smallest vertex cover is in turn upper bounded by the size of the smallest {\em connected} vertex cover. Consequently, all our parameters are upper bounded by the standard parameter for {\CVC}, the solution size.
Moreover, since the classes of bounded treewidth graphs, cographs, and chordal graphs are all pairwise incomparable, it follows that our parameters are also pairwise incomparable. See Figure~\ref{fig:parameter-hierarchy} for a hierarchy of the parameters.

\mypara{Parameterization by deletion distance to bounded treewidth graphs.}
One of our first two results concerns PSAKS for {\CVC} parameterized by the deletion distance to graphs with treewidth $\eta$ (Section~\ref{sec:prelims} contains the formal definitions of a PSAKS).
In our first result, we demonstrate the existence of something stronger -- a \emph{time efficient} PSAKS for a more general problem which we call {\cvcGdeletion}, defined as below. Here, $\cG$ is a {\em fixed hereditary graph class}.

\begin{problem}[]{{\sc CVC parameterized by ${\cG}$-Deletion ({\cvcGdeletion})}}
  \Input & A graph $G$, a vertex set $S$ of size $k$ such that $G - S \in {\cG}$, integer $p$. \\
  \Parameter & \hspace{5 pt} $k$ \\
  \Prob & Does $G$ have a connected vertex cover of size at most $p$?
\end{problem}

\begin{restatable}{theorem}{CVCGeneralTheorem}
\label{thm:cvc-G-deletion-theorem}
Suppose that a graph class ${\cG}$ is polynomial-time recognizable.
For every $0 < \eps < 1$, 
{\cvcGdeletion} admits a {\sc PSAKS} with $k^{\OO(1/\eps)}$ vertices if {\CVC} has a {\sc PTAS} on the graph class ${\GG} + 1\cdot {\sf v}$. Moreover, if this {\sc PTAS} is an Efficient {\sc PTAS}, then the {\sc PSAKS} is a time efficient {\sc PSAKS}.
\end{restatable}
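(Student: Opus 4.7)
The plan is to construct a strict $(1+\eps)$-approximate polynomial kernel of size $k^{\OO(1/\eps)}$ by enumerating small ``witness'' subsets of the modulator $S$ and locally invoking the assumed PTAS on $\cG + 1\cdot{\sf v}$. Fix $d := \lceil c/\eps\rceil$ for a suitable constant $c$. The guiding intuition is that although a globally optimal CVC $Z^*$ can interact with $S$ in any of the $2^k$ possible ways, for the purpose of $(1+\eps)$-approximation we only need the kernel to preserve enough structure so that, for $A^* = Z^* \cap S$, a near-optimal extension of $A^*$ to $V\setminus S$ survives in the kernel; if $|A^*|>d$ we will instead leverage the slack that $|Z^*|$ itself is already large.

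After a standard safe preprocessing phase (discarding components of $G-S$ disjoint from $N(S)$, reducing twin vertices in $V\setminus S$, and capping $p$ against a polynomial in $k$), the algorithm enumerates all subsets $A\subseteq S$ with $|A|\leq d$; there are $\sum_{i\leq d}\binom{k}{i}=k^{\OO(1/\eps)}$ of them. For each such $A$, construct $H_A$ from $G[A\cup(V\setminus S)]$ by identifying the vertices of $A$ into a single new vertex $s_A$. Since $\cG$ is hereditary, $H_A-s_A=G[V\setminus S]\in\cG$, so $H_A\in\cG+1\cdot{\sf v}$ and the hypothesised PTAS applies to it. Invoke this PTAS on $H_A$ with accuracy $\eps/4$ to obtain a CVC $C_A$ of $H_A$ within $(1+\eps/4)$ of optimum, and let $\widehat{C}_A\subseteq V\setminus S$ be the preimage of $C_A\setminus\{s_A\}$. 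The output kernel is $G':=G[S\cup\bigcup_A \widehat{C}_A]$, together with the same modulator $S$ and a suitably rescaled budget $p'$. Its vertex count is $k^{\OO(1/\eps)}$ because there are that many subsets $A$ and each $|\widehat{C}_A|$ is polynomial in $k$ after the preprocessing has forced $\OPT(H_A)\leq p'\leq\mathrm{poly}(k)$.

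For the solution-lifting map, given a $c$-approximate CVC $Z'$ of $G'$ we simply return $Z'$, which is a CVC of $G$ because $G-S=G'-S$ and every edge of $G$ incident to $S$ is covered by $Z'\cap S$. To certify the $c(1+\eps)$ ratio we compare $Z'$ to a reference CVC built from an optimal $Z^*$ of $G$: when $|Z^*\cap S|\leq d$, the set $A^*:=Z^*\cap S$ itself is enumerated, and $A^*\cup\widehat{C}_{A^*}$ is a CVC of $G'$ of size at most $|Z^*|(1+\eps/4)$; when $|Z^*\cap S|>d$, a telescoping argument over size-$d$ subsets of $Z^*\cap S$ shows that a near-optimal extension still survives within $\bigcup_A \widehat{C}_A$, using $|Z^*|\geq d$ to absorb the enumeration loss. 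The main obstacle is precisely this second case, handled by a standard marking/charging argument from the lossy-kernel literature: each modulator vertex of $Z^*\cap S$ not captured by any $\widehat{C}_A$ can be charged at rate $\OO(1/d)=\OO(\eps)$ against $|Z^*|$. Finally, the ``time-efficient'' strengthening is automatic: an EPTAS runtime $f(\eps)\cdot n^{\OO(1)}$ propagates through the $k^{\OO(1/\eps)}$-fold enumeration to a total kernelization time of $f(\eps)\cdot(n+k)^{\OO(1)}$, meeting the definition of a time-efficient PSAKS.
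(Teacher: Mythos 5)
There is a genuine gap, and it is the central difficulty of the theorem. Your kernel is an induced subgraph $G'=G[S\cup\bigcup_A \widehat{C}_A]$ on $k^{\OO(1/\eps)}$ vertices, and your lifting map returns the kernel solution $Z'$ unchanged, justified by the claim that ``$G-S=G'-S$''. That claim is false: $G'-S$ is a (typically tiny) induced subgraph of $G-S$, so $Z'$ leaves uncovered every edge of $G$ with an endpoint outside $V(G')$. More importantly, no fix along these lines can work in the regime $\OPT(G)\gg k$: if $G-S$ is, say, a long path (so $k$ may be $0$ while $\OPT(G)=\Theta(n)$), then a vertex set of size $k^{\OO(1/\eps)}$ cannot contain any connected vertex cover of $G$, so a solution of the kernel can never be returned verbatim. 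Your attempt to sidestep this by ``capping $p$ against a polynomial in $k$'' in preprocessing is unjustified --- there is no such bound on $\OPT$ in terms of $k$, and in the parameterized-optimization formulation there is no budget $p$ to cap. The paper's proof turns on exactly this dichotomy, computed via a $2$-approximate CVC $L$: if $|S|>\frac{\eps}{6}|L|$ then $\OPT<\lceil 6k/\eps\rceil$ and the solution-size PSAKS of Lokshtanov et al.\ applies with parameter $\lceil 6k/\eps\rceil$; if $|S|\le\frac{\eps}{6}|L|$ then $|S|\le\frac{\eps}{3}\OPT$, the reduction outputs a \emph{trivial} instance, and the solution-lifting algorithm \emph{ignores} the kernel solution and directly computes a $(1+\eps)$-approximate CVC of $G$ by contracting all of $S$ to one vertex, running the assumed PTAS on the resulting graph in $\cG+1\cdot{\sf v}$, un-contracting, and repairing connectivity with at most $|S|$ extra vertices (an additive cost of $2|S|\le\frac{2\eps}{3}\OPT$). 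Your proposal contains neither half of this dichotomy.

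Secondary problems compound this. Even in your ``easy'' case $|Z^*\cap S|\le d$, the set $A^*\cup\widehat{C}_{A^*}$ need not be a vertex cover of $G'$ (edges between $S\setminus A^*$ and $V(G')\setminus(A^*\cup\widehat{C}_{A^*})$ are unaccounted for) and need not be connected (un-identifying $s_{A^*}$ back into $A^*$ can disconnect the solution; this is precisely the repair step the paper's Lemma~\ref{lem:easy-algorithm-G-deletion} performs and charges against $\OPT$). The bound $|\widehat{C}_A|\le\mathrm{poly}(k)$ rests on the unjustified cap on $\OPT$, and the ``telescoping/charging'' argument for $|Z^*\cap S|>d$ is asserted rather than proved. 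The enumeration of size-$d$ subsets of $S$ is also unnecessary: since one may afford to place \emph{all} of $S$ into the solution whenever $|S|=\OO(\eps\cdot\OPT)$, contracting $S$ in one step suffices, which is what the paper does.
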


The class ${\cG} + 1\cdot {\sf v}$ is simply the class of all graphs from which a single vertex can be removed to obtain a graph in ${\cG}$.
We refer the reader to \cite{Vazirani-Approx} for the definitions of PTAS and Efficient PTAS. 
Now, as a consequence of Theorem~\ref{thm:cvc-G-deletion-theorem} and the fact that {\CVC} has a linear time algorithm on graphs of constant treewidth, we have our second result.
 
\begin{restatable}{cor}{TreewidthResult}
\label{cor:cvc-treewidth-result}
For every fixed $\eta$, 
{\cvctreewidth} has a time efficient {\sc PSAKS} with $k^{\OO(1/\eps)}$ vertices.
\end{restatable}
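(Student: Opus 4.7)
The plan is to derive Corollary~\ref{cor:cvc-treewidth-result} directly from Theorem~\ref{thm:cvc-G-deletion-theorem} by instantiating $\cG$ as the class of graphs of treewidth at most $\eta$ and verifying the two hypotheses of the theorem, namely (i) polynomial-time recognizability of $\cG$, and (ii) the existence of an EPTAS for {\CVC} on the class $\cG + 1 \cdot {\sf v}$.

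For hypothesis~(i), since $\eta$ is fixed, Bodlaender's classical algorithm recognizes whether a graph has treewidth at most $\eta$ (and produces a tree decomposition of that width when it does) in time linear in $|V(G)|$, so the class is polynomial-time recognizable. For hypothesis~(ii), I would first observe that every graph $H \in \cG + 1 \cdot {\sf v}$ has treewidth at most $\eta + 1$: taking any tree decomposition of the $\cG$-part of width $\eta$ and inserting the single extra vertex into every bag yields a valid tree decomposition of $H$ of width at most $\eta + 1$. Since $\eta + 1$ is a constant, I can then appeal to the standard dynamic programming algorithm for {\CVC} on tree decompositions (equivalently, to Courcelle's theorem applied to the MSO$_2$-definable \CVC{} property together with a connectivity check) to obtain an \emph{exact} algorithm running in time linear in the graph size. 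An exact polynomial-time algorithm is trivially an EPTAS for every $\eps > 0$, so the stronger hypothesis of Theorem~\ref{thm:cvc-G-deletion-theorem} for time efficient PSAKSes is satisfied.

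With both hypotheses in place, the conclusion of Theorem~\ref{thm:cvc-G-deletion-theorem} yields a time efficient PSAKS for {\cvctreewidth} with $k^{\OO(1/\eps)}$ vertices, which is exactly the statement of Corollary~\ref{cor:cvc-treewidth-result}. The proof is thus essentially a verification of the two bullets in the hypothesis of Theorem~\ref{thm:cvc-G-deletion-theorem}; there is no genuine obstacle beyond pinning down the small increase of treewidth under the $+1\cdot{\sf v}$ operation and invoking the right off-the-shelf algorithm on bounded-treewidth graphs. The only subtlety to flag in the write-up is to confirm that a tree decomposition of width $\eta + 1$ of the input to the PTAS subroutine can be computed within the overall time budget, which is again handled by Bodlaender's algorithm since the width bound $\eta + 1$ is a constant.
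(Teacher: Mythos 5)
Your proposal is correct and follows essentially the same route as the paper: instantiate Theorem~\ref{thm:cvc-G-deletion-theorem} with $\cG$ the class of treewidth-$\eta$ graphs, observe that $\cG+1\cdot{\sf v}$ has treewidth at most $\eta+1$, and use the exact linear-time algorithm for {\CVC} on bounded-treewidth graphs (via Courcelle's theorem) as a trivial EPTAS. The only cosmetic difference is that you cite Bodlaender's algorithm for recognition where the paper cites Flum and Grohe, and you spell out the bag-augmentation argument for the treewidth bound, which the paper states without proof.
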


This result provides an interesting contrast to the result of Cygan et al.~\cite{CyganLPPS14} which rules out polynomial kernelizations even for {\sc Vertex Cover} (with no connectivity requirements) parameterized by the deletion distance to treewidth $\eta$ graphs (for $\eta>1$).

\mypara{Parameterization by deletion distance to chordal graphs.} For our third result, we consider the parameterization by the deletion distance of the input graph to another class of graphs which is incomparable with both bounded treewidth and bounded diameter graphs. This is the class of chordal graphs. The central idea driving this result is a new reduction rule for {\CVC}, the exhaustive application of which leaves us with an equivalent instance of {\cvctreewidth} for an appropriate $\eta$ depending only on $\eps$. By combining this reduction rule with Corollary~\ref{cor:cvc-treewidth-result}, we obtain the following result.

\begin{restatable}{theorem}{chordalmaintheorem}\label{thm:chordal_main}
{\cvcchordal} has a time efficient {\sc PSAKS} with $k^{\OO(1/\eps)}$ vertices.
\end{restatable}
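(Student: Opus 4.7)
The plan is to design a new reduction rule tailored to \cvcchordal{} that, when applied exhaustively, converts the input instance into an equivalent instance of \cvctreewidth{} for some $\eta=\eta(\eps)$ depending only on $\eps$. Once this is established, a direct invocation of Corollary~\ref{cor:cvc-treewidth-result} yields the desired time efficient PSAKS with $k^{\OO(1/\eps)}$ vertices, since the overhead constant hidden in $k^{\OO(1/\eps)}$ is allowed to depend on $\eta$, which in turn depends only on $\eps$.

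Since $G-S$ is chordal, its treewidth equals the maximum clique size minus one. Hence it suffices to reduce the instance until every clique in $G-S$ has size at most $\eta+1$. The key structural observation driving the reduction is that any connected vertex cover of $G$ must contain at least $|C|-1$ vertices of every clique $C$ in $G-S$, because at most one vertex of a clique can be left out of a vertex cover. For large cliques this mandatory contribution to $\OPT$ creates substantial budget slack that can be leveraged to implement an approximation-preserving rule.

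Concretely, I would partition the vertices of each sufficiently large clique $C$ into equivalence classes according to their adjacencies outside $C$ (their neighbourhoods in $S$ and their role with respect to the rest of $G-S$), and show that any two vertices from the same class are interchangeable in the following sense: swapping their membership in a connected vertex cover of $G$ preserves both coverage and connectivity, because inside the clique one vertex can always play the ``outside the cover'' role for another. This should justify a rule that keeps only a bounded number of representatives per class and declares the remaining vertices of that class mandatory -- deleting them from $G$ and decrementing $p$ accordingly. The main obstacle, and the crux of the new argument, is to argue that the resulting per-clique size can be bounded by a function of $\eps$ alone, rather than by the number of distinct $S$-neighbourhood types (which could a priori be as large as $2^k$). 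The plan is to handle this via an amortized charging scheme: the tiny accuracy loss introduced by each merge is absorbed against the $|C|-1$ mandatory contribution of $C$ to $\OPT$, so the cumulative loss over all reductions remains within the required factor of $(1+\eps)$ on the final optimum.

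Once the rule is verified to be $\alpha$-safe in the sense of approximate kernelization, and $\eta(\eps)$ is calibrated so that the composition of all local $(1+O(\eps))$-losses yields the target approximation ratio on the whole instance, every surviving clique of $G-S$ has size at most $\eta+1$. Consequently, the same set $S$ witnesses that $G-S$ has treewidth at most $\eta$, so the reduced instance is a valid instance of \cvctreewidth{}. Applying Corollary~\ref{cor:cvc-treewidth-result} then supplies a time efficient PSAKS of size $k^{\OO(1/\eps)}$, which composed with the (polynomial-time) reduction rule yields the claimed PSAKS for \cvcchordal{}.
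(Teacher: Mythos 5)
Your high-level strategy --- a clique-shrinking reduction rule on $G-S$ that leaves an instance of {\cvctreewidth} for some $\eta=\eta(\eps)$, followed by an invocation of Corollary~\ref{cor:cvc-treewidth-result} --- is exactly the paper's, and your starting observation (every vertex cover contains at least $|C|-1$ vertices of each clique $C$) is the right lever. However, the concrete rule you propose has a genuine gap that you yourself flag and never close: partitioning a large clique into classes by external neighbourhood and keeping representatives gives no bound depending on $\eps$ alone, since the number of distinct $S$-neighbourhood types can be $2^{k}$, and the ``amortized charging scheme'' that is supposed to rescue this is left unspecified. As written, the rule does not provably shrink cliques below any threshold $\eta(\eps)$, so the reduction to bounded treewidth is not established. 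A secondary problem is that ``declaring vertices mandatory, deleting them from $G$ and decrementing $p$'' is unsafe for \emph{connected} vertex cover: removing a forced vertex from the graph can destroy the connectivity that the lifted solution must exhibit, which is precisely why forced vertices have to be kept in the graph and pinned by pendants.

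The paper's rule (Reduction Rule~\ref{red:compress-maximal-clique}) sidesteps both issues and is simpler than what you attempt: take any maximal clique $C$ of $G-S$ with $|C|\geq \eta:=2+\lceil 1/\eps\rceil$, contract all of $C$ to a single vertex $u_C$, and attach a pendant $v_C$ so that $u_C$ lies in every connected vertex cover of the new graph. No representatives are needed because the solution-lifting step replaces $u_C$ by the \emph{entire} clique $C$; since any solution of $G$ already pays at least $|C|-1$ inside $C$ while the contracted instance saves at least $|C|-2$, the loss ratio is at most $\frac{|C|-1}{|C|-2}\leq \frac{\eta-1}{\eta-2}\leq 1+\eps$ (Lemma~\ref{lem:safeness-compress-maximal-clique-rule}), and contraction plus a pendant preserves connectivity rather than deleting anything. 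Once no maximal clique of $G-S$ has size $\eta$, the reduced graph minus $S$ has treewidth at most $\eta-2$ and Corollary~\ref{cor:cvc-treewidth-result} applies. The idea missing from your write-up is exactly this: you do not need to distinguish vertices of $C$ by their outside neighbourhoods at all, because taking all of $C$ into the solution is already within a factor $\frac{|C|-1}{|C|-2}$ of what any solution must pay on $C$.
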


\mypara{Parameterization by deletion distance to split graphs or cographs.}
After our result on {\cvcchordal}, we obtain a PSAKS for {\CVC} parameterized by the deletion distance of the input graph to the class of split graphs and cographs.
More specifically, we consider the parameter $|S|$ such that every connected component of $G - S$ is either a split graph or a cograph.

\begin{restatable}{cor}{diametermaintheorem}\label{thm:diameter_main}
{\CVC} admits a time efficient PSAKS when parameterized by deletion distance to a graph whose connected components are either split graph or cograph.
\end{restatable}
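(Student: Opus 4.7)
The plan is to derive this corollary as a further consequence of Theorem~\ref{thm:cvc-G-deletion-theorem}. Let $\cG$ denote the hereditary class consisting of graphs in which every connected component is either a split graph or a cograph. Recognition of $\cG$ in polynomial time is routine: decompose the input into connected components and test each one using the standard linear-time recognition algorithms for split graphs and for cographs. It therefore suffices to exhibit an Efficient PTAS for {\CVC} on the class $\cG + 1 \cdot {\sf v}$.

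Given an instance $H$ with distinguished vertex $v$ such that $H - v \in \cG$, I would first branch on the status of $v$: either $v$ lies in the cover, or it does not (in which case every neighbor of $v$ must be in the cover, and the remainder of the cover is forced to lie inside a single component of $H - v$, a case easily handled by inspection). In the main case, $v$ lies in the cover and acts as a hub; the problem then decomposes into one sub-instance per connected component of $H - v$, where on each component $C$ I must compute a minimum vertex cover of $C$ subject to the local connectivity constraint that every connected component of the cover inside $C$ contains at least one neighbor of $v$. If $C$ is a cograph, this generalized problem can be solved exactly in polynomial time by a dynamic program on the cotree, tabulating at each cotree node the minimum cost for every relevant pattern of interaction with the neighborhood of $v$. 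If $C$ is a split graph with clique $K$ and independent set $I$, an Efficient PTAS can be obtained by the following dichotomy: when $|K| > 1/\eps$, taking the entire clique $K$ as the cover is already a $(1 + \eps)$-approximation, since any connected vertex cover of $C$ must contain at least $|K| - 1$ clique vertices; when $|K| \le 1/\eps$, one enumerates all $2^{|K|} \le 2^{1/\eps}$ subsets of $K$ to include in the cover and completes each guess on $I$ in polynomial time (including the forced $I$-vertices needed to cover edges from $K\setminus K'$ and to respect connectivity).

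Assembling the per-component approximate solutions through $v$ yields a connected vertex cover of $H$ whose size is within a $(1 + \eps)$ factor of optimum, and the total running time is of the form $f(\eps) \cdot |V(H)|^{\cO(1)}$. The main difficulty I foresee is the careful handling of the connectivity constraint across components, namely ensuring that the ``terminal connectivity'' requirement imposed by $v$ is correctly reflected in both the cograph dynamic program and the split-graph enumeration, so that the per-component guarantees compose into a global $(1 + \eps)$-approximation rather than merely an additive one. Once the Efficient PTAS on $\cG + 1 \cdot {\sf v}$ is established, an invocation of Theorem~\ref{thm:cvc-G-deletion-theorem} delivers a time efficient {\sc PSAKS} with $k^{\cO(1/\eps)}$ vertices, completing the proof.
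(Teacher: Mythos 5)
Your proposal is correct in outline and takes the same top-level route as the paper: verify the hypothesis of Theorem~\ref{thm:cvc-G-deletion-theorem} for the hereditary, polynomial-time recognizable class $\cG$ of graphs whose components are split graphs or cographs, by handling {\CVC} on $\cG+1\cdot{\sf v}$ via branching on the cut vertex and solving one terminal-constrained subproblem per component of the remainder. Where you diverge is in how that algorithmic ingredient is supplied. The paper's Lemma~\ref{Lem:characterization-of-Fd} gives an \emph{exact} polynomial-time algorithm (including the ``smallest connected vertex cover containing $u$'' variant that Lemma~\ref{lem:easy-algorithm-G-deletion} actually consumes). For split components it exploits the fact that any vertex cover omits at most one vertex of the clique side $K$, so only $|K|+1$ guesses are ever needed; your $2^{|K|}$-subset enumeration and the separate large-clique $(1+\eps)$-approximation case are therefore superfluous (though not wrong). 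For cograph components the paper invokes rankwidth $1$ together with the Courcelle--Makowsky--Rotics theorem for $\textsf{MSO}_1$-expressible optimization, which absorbs the terminal-connectivity constraint for free and thus eliminates exactly the step you flag as your main difficulty --- the cotree dynamic program, which you have not actually constructed. Your worry about composing per-component guarantees is, by contrast, unfounded: once $v$ is placed in the cover the objective is additive over components, so per-component $(1+\eps)$ bounds combine into a global multiplicative $(1+\eps)$ bound without loss. In short, your EPTAS-only route would suffice for Theorem~\ref{thm:cvc-G-deletion-theorem}, but the unwritten cograph DP and the glossed-over branch where $v$ is excluded (which is itself a terminal-constrained instance on a single component, not mere ``inspection'') are the two places where real work remains; the meta-theorem route closes both at once and yields the stronger exact statement the paper records as Lemma~\ref{Lem:characterization-of-Fd}.
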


While the result in the case of split graphs provides an alternate proof to the one given by Krithika et al.~\cite{KMR18}, our PSAKS for {\CVC} parameterized by the deletion distance to cographs is the first such result. 
Finally, we prove that our three main results can in fact be unified under a single even stronger parameterization which is the deletion distance of the input graph to the class of graphs where every connected component is \emph{either} a treewidth-$\eta$ graph {\em or} a chordal graph {\em or} a cograph. 

%{\color{blue}
\begin{restatable}{theorem}{FinalResultMain}
\label{thm:more-general-theorem-statement}
For every fixed $\eta\in {\mathbb N}$, {\CVC} parameterized by the size $k$ of a modulator to the class of graphs where each connected component is a cograph or a chordal graph or a graph of treewidth at most $\eta$ admits a time efficient {\sc PSAKS} with $k^{\OO(1/\eps)}$ vertices.
\end{restatable}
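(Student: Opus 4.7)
The plan is to reduce to the previously established results by sequentially applying the chordal reduction rule of Theorem~\ref{thm:chordal_main} and then the $\cG$-deletion framework of Theorem~\ref{thm:cvc-G-deletion-theorem} with a suitably mixed hereditary class. Given $(G,S,p)$ with $|S|=k$, I first assign each component $C$ of $G-S$ a unique type in $\{\textsf{tw-}\eta,\textsf{cograph},\textsf{chordal}\}$, breaking ties in that order of preference. This typing is polynomial-time computable since each of the three classes is polynomial-time recognizable.

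I then exhaustively apply the chordal reduction rule of Theorem~\ref{thm:chordal_main} to every type-$\textsf{chordal}$ component together with its edges into $S$. The critical observation is that this rule is strictly local to a single chordal component and its interface with $S$, so its correctness and approximation guarantee are unaffected by non-chordal components elsewhere, and $S$ itself is left intact. After exhaustive application, every former type-$\textsf{chordal}$ component has treewidth at most some $\eta'=\eta'(\eps)$. Letting $\eta^\star=\max\{\eta,\eta'\}$ and taking $\GG^\star$ to be the hereditary, polynomial-time recognizable class of graphs whose components are cographs or have treewidth $\le\eta^\star$, the reduced instance satisfies $G-S\in\GG^\star$ with modulator size still $k$, so I invoke Theorem~\ref{thm:cvc-G-deletion-theorem} with $\cG=\GG^\star$.

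To discharge its hypothesis I must supply an Efficient PTAS for {\CVC} on $\GG^\star+1\cdot{\sf v}$. I branch on whether the distinguished extra vertex lies in the cover; each branch reduces to {\CVC} on a $\GG^\star$-graph with a fixed set of forced and forbidden vertices. On such an instance each component admits a polynomial-time CVC algorithm (the cotree decomposition for cographs and Courcelle-style MSO machinery for bounded-treewidth components). Global connectivity of the assembled cover is enforced by the same interface-signature enumeration that underlies Corollary~\ref{cor:cvc-treewidth-result}: for each component enumerate a bounded-size profile of which boundary vertices its local cover uses, solve the component optimally subject to the profile, and verify overall connectivity by a final polynomial-time stitching step. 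Composing the $(1+\eps_1)$ error from the chordal stage with the $(1+\eps_2)$ error from Theorem~\ref{thm:cvc-G-deletion-theorem} and setting $\eps_1,\eps_2=\Theta(\eps)$ yields the claimed $k^{\OO(1/\eps)}$-vertex time-efficient PSAKS; the size bound is inherited from Theorem~\ref{thm:cvc-G-deletion-theorem} because the chordal stage does not enlarge the modulator.

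The main obstacle is securing the \emph{efficient} clause of the PTAS on $\GG^\star+1\cdot{\sf v}$: one must ensure that the number of relevant interface profiles per component depends only on $\eps$ and the boundary size, not on the component's internal size, so that the overall running time is $f(\eps)\cdot n^{\OO(1)}$. A secondary subtlety is that the chordal reduction rule from Theorem~\ref{thm:chordal_main} was originally analyzed when every component of $G-S$ is chordal; verifying that its guarantees survive in the mixed setting amounts to reconfirming its locality, which the proof of Theorem~\ref{thm:chordal_main} already establishes.
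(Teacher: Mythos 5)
Your proposal is correct in substance but takes a genuinely different route from the paper for the chordal components. The paper does \emph{not} reuse Reduction Rule~\ref{red:compress-maximal-clique} in the proof of Theorem~\ref{thm:more-general-theorem-statement}; instead it observes that after identifying $S$ into a single vertex $\hu$, every component of $\hG-\hu$ is either chordal or of bounded cliquewidth (treewidth-$\eta$ graphs and cographs both having bounded cliquewidth), and then proves Lemma~\ref{lem:cvc-easy-solvability-mixed-parameter}: an \emph{exact} polynomial-time algorithm for {\CVC} on such graphs, handling chordal-plus-one-vertex components by a dynamic program over the clique tree decomposition (each bag is a clique plus $\hu$, so only polynomially many states per bag survive) and the rest by Courcelle--Makowsky--Rotics; Theorem~\ref{thm:cvc-G-deletion-theorem} is then invoked once. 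You instead first spend an $(1+\eps_1)$ loss to contract large maximal cliques inside the chordal components (as in Theorem~\ref{thm:chordal_main}), after which every component is a cograph or has treewidth bounded in terms of $\eps$, and only then invoke Theorem~\ref{thm:cvc-G-deletion-theorem}. Both routes are sound: your composition of the two lossy stages mirrors exactly what the paper already does in Theorem~\ref{thm:chordal_main} (and the locality of Lemma~\ref{lem:safeness-compress-maximal-clique-rule}, which you rightly flag, does hold --- its proof only uses that $C$ is a clique of $G-S$ and never uses chordality of the other components), while the paper's route keeps the treatment of chordal components lossless. Two remarks on your residual worries. First, the ``main obstacle'' you identify about interface profiles is a non-issue: in the framework of Theorem~\ref{thm:cvc-G-deletion-theorem} the modulator is identified into a \emph{single} vertex before the {\sc PTAS} is invoked, so the boundary of every component is one vertex and the stitching argument is exactly that of Lemma~\ref{Lem:characterization-of-Fd} ($\hu$ is forced into the cover whenever at least two components see it, and one takes the union of per-component minimum connected vertex covers through $\hu$). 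Second, once the chordal components are gone your class $\GG^\star+1\cdot{\sf v}$ has bounded cliquewidth outright (disjoint unions and the addition of one vertex preserve boundedness), so a single $\textsf{MSO}_1$ invocation gives the exact polynomial-time algorithm --- you do not need any per-component profile enumeration at all. This is precisely the simplification your preprocessing buys you over the paper's Lemma~\ref{lem:cvc-easy-solvability-mixed-parameter}, whose clique-tree dynamic program exists only because chordal graphs have unbounded cliquewidth.
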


When this deletion set is part of the input we can directly utilize our framework to also encapsulate graph classes which are significantly more general than classes comprising graphs which have a small deletion distance to {\em only one} of $\{\tw\mbox{-}\eta,{\sf Chordal},{\sf Cograph}\}$.
For instance, one can easily observe the existence of graphs from which an unbounded number of vertices must be deleted in order to move them into {\em any one} of $\{\tw\mbox{-}\eta,{\sf Chordal},{\sf Cograph}\}$ but at the same time only a constant number of vertices need to be deleted from them in order to obtain a graph where each connected component lies in one of $\{\tw\mbox{-}\eta,{\sf Chordal},{\sf Cograph}\}$. We refer the reader to Section~\ref{sec:conclusion} for a more detailed discussion. 

\mypara{Related work on kernelizations for {\sc Vertex Cover} with respect to parameters smaller than solution size.}~Kratsch and Wahlstr\"{o}m~\cite{KratschW12} gave the first (randomized) polynomial kernelization for {\sc Vertex Cover} parameterized above the optimum value of the standard LP relaxation. This result was later strengthened by Kratsch~\cite{Kratsch16} who parameterized above an even stronger lower bound on the solution, $2{\sf LP}-{\sf MM}$, where ${\sf LP}$ denotes the optimum value of the standard LP relaxation and ${\sf MM}$ denotes the size of the maximum matching in the input graph. Majumdar et al.~\cite{MajumdarRS15} considered as their parameter the deletion distance to graphs of degree 2 and to cluster graphs where each clique has bounded size. 
Fomin and Str{\o}mme~\cite{FominS16} considered the deletion distance to pseudo-forests, which are graphs where every connected component has at most one cycle. They also 
showed that parameterizing {\sc Vertex Cover} by the deletion distance even to cactus graphs is unlikely to lead to a polynomial kernel. More recently, Kratsch and Hols~\cite{KratschHols17} generalized the positive results of~\cite{MajumdarRS15,FominS16} by choosing as their parameter the deletion distance of the input graph to $d$-quasi-forests, which are graphs where every connected component has a set of at most $d$ vertices whose deletion leaves a forest. Finally, Bougeret and Sau~\cite{BougeretSau17} focussed on the deletion distance to graphs of treedepth $c$, for any fixed constant $c$, and obtained a polynomial kernel for {\sc Vertex Cover}.

%\newpage
\section{Preliminaries}
\label{sec:prelims}

We use $[r]$ to denote the set $\{1,2,\ldots,r\}$.

\mypara{Graph theoretic preliminaries.}
For $\ell\in {\mathbb N}$, we denote by $P_\ell$ the path on $\ell$ vertices.
All graphs studied in this paper are undirected. 
A graph is a {\it cluster graph} if each of its connected components is a clique. A graph is a {\it co-cluster graph} if its complement is a cluster graph.
A graph is a {\it split graph} if its vertex set can be partitioned into two sets one inducing a clique and the other inducing an independent set.
A graph is called a {\it cograph} if it has no induced $P_4$.
A graph is called {\em chordal} if it has no induced cycle of length more than $3$.
Given an undirected connected graph, for every pair of vertices $u,v \in V(G)$, we use $dist(u,v)$ to denote the length of a shortest path from $u$ to $v$.
The {\em diameter} of $G$ is $\max_{\{u,v\} \in {{V(G)}\choose{2}}} dist(u,v)$.
In other words, the diameter of a graph is the largest among all shortest distances between any pair of vertices.
Given a graph $G$ and vertex set $X$, we define the operation of {\em identifying} the set $X$ as the construction of the following graph $G'$.
The vertex set of $G'$ is $V(G')=V(G)\setminus X\cup \{\hat x\}$, where $\hat x$ is a new vertex not in $G$.
The edges of $G'$ are defined as follows. For every $u,v\in V(G)\setminus X$, if $uv\in E(G)$, then $uv\in E(G')$ as well. For every $u\in V(G)\setminus X$ and $v\in X$, if $uv \in E(G)$, then $u\hat{x}\in E(G')$.
We ignore all edges of $G$ with both endpoints in $X$.
We introduce and define the following notation that we will use throughout the rest of the paper.
We use $\cw (G)$ to denote the {\em cliquewidth} of the graph $G$ (see~\cite{Courcelle1990} for the definition of cliquewidth).

\begin{definition}
\label{defn:graph-class-modulator}
Let $\cG$ be a graph class. For $r\in {\mathbb N}$, we denote by $\cG +r \cdot {\sf v}$ the class of graphs in which there is a set of at most $r$ vertices whose deletion results in a graph from $\cG$. We denote by $\CC\CC(\cG)$ the class of all graphs whose connected components lie in $\cG$ and by $\CC\CC(\cG) +r \cdot {\sf v}$ the class of graphs in which there is a set of at most $r$ vertices whose deletion results in a graph where every connected component is in $\cG$.
\end{definition}

Let $S \subseteq V(G)$ be a set of vertices such that $G - S \in {\cG}$.
Then we say that $S$ is a {\em ${\cG}$-deletion set} of $G$.

\mypara{Tree Decomposition and Treewidth.} Let $G$ be a graph. A {\em tree decomposition} of $G$ is a pair $(T,\mathcal{X}=\{X_{t}\}_{t\in V(T)})$ where $T$ is a tree and ${\cal X}$ is a collection of subsets of $V(G)$ such that (a)
$\forall {e=uv \in E(G)}, \exists {t\in V(T)} : \{u,v\} \subseteq X_{t}$ and 
(b) $\forall {v\in V(G)}, \ T[\{t\mid v\in X_{t}\}]$ is a non-empty connected subtree of $T$. 
The {\em width} of $(T,{\cal X})$ is 
 defined as $\max_{}\{|X_t|-1\mid {t\in V(T)}\}$ and the {\em treewidth} of $G$ is the minimum width over all tree decompositions of $G$ and is denoted by $\tw(G)$. 
An {\em $\eta$-transversal} of a graph $G$ is a subset $X$ such that $\tw(G-X)\leq \eta$.
To perform dynamic programming over tree decomposition, we consider {\em nice tree decomposition} that is a tree decomposition satisfying some specific properties.
Due to Cygan et al.~\cite{CyganFKLMPPS15}, given a tree decomposition of width $\eta$, there exists a polynomial-time algorithm that converts the given tree decomposition into a nice tree decomposition of width $\eta$ in polynomial-time.

\begin{definition}{\rm [\sf Nice Tree Decomposition]}
\label{defn:nice-tree-decomposition}
Let $G$ be a graph.
A tree decomposition $\cT = (T,\mathcal{X}=\{X_{t}\}_{t\in V(T)})$ is said to be a {\em nice tree decomposition} if it is rooted, every node has at most two children, and each of its nodes are of one of the following types.
\begin{itemize}
	\item {\bf Root node:} If $r$ is the {\em root node} of $\cT$, then $X_r = \emptyset$.
	Informally, the bags corresponding to the root node is an empty bag.
	\item {\bf Leaf node:} If $t \in V(T)$ is a {\em leaf node} of $\cT$, then $X_t = \emptyset$.
	Informally, the bag corresponding to leaf nodes are also empty bags.
	\item {\bf Introduce node:} Suppose that $t_2$ is the only child of $t_1$ in $\cT$.
	Then, $t_1$ is called an {\em introduce node} when there exists $u \notin X_{t_2}$ such that $X_{t_1} = X_{t_2} \cup \{u\}$.
	Informally speaking, an introduce node adds a new vertex.
	\item {\bf Forget node:} Suppose that $t_2$ is the only child of $t_1$ in $\cT$.
	Then, $t_1$ is called a {\em forget node} when there exists $u \in X_{t_2}$ such that $X_{t_1} = X_{t_2} \setminus \{u\}$.
	Informally speaking, a forget node removes a vertex.
	\item {\bf Join node:} Let $t$ has two children $t_1$ and $t_2$.
	Then, $t$ is called a {\em join node} if $X_t = X_{t_1} = X_{t_2}$.
	Informally, a join node joins two bags with equal contents.
\end{itemize}
\end{definition}

We also provide the definition of {\em clique tree decomposition} that we use later in our paper.

\begin{definition}{\rm[\sf Clique Tree Decomposition]}
\label{defn:clique-tree-decomposition}
A tree decomposition $\cT = (T,\mathcal{X}=\{X_{t}\}_{t\in V(T)})$ of a graph $G$ is called a {\em clique tree decomposition} of $G$ if it is a valid tree decomposition of $G$, and for all $t \in V(T)$, $G[X_t]$ induces a clique in $G$.
Informally, a clique tree decomposition of a graph is a valid tree decomposition every bag of which induces a clique.
\end{definition}
 
\begin{theorem}[~\cite{Golumbic}]
\label{lemma:chordal-clique-tree}
The number of maximal cliques of a chordal graph is upper bounded by a polynomial in the size of the graph, and can be enumerated in polynomial-time.
These maximal cliques form a clique tree decomposition of a chordal graph.
Furthermore, this clique tree decomposition of a chordal graph can also be obtained in polynomial-time.
\end{theorem}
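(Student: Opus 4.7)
The plan is to route everything through the classical characterization of chordal graphs via \emph{perfect elimination orderings} (PEOs): an ordering $v_1,\ldots,v_n$ of $V(G)$ is a PEO if, for every $i$, the later neighbourhood $N(v_i)\cap\{v_{i+1},\ldots,v_n\}$ induces a clique in $G$. A theorem of Fulkerson and Gross states that $G$ is chordal if and only if it admits a PEO, and such an ordering can be computed in $\mathcal{O}(n+m)$ time by Lexicographic BFS or Maximum Cardinality Search. So the first step would be to invoke one of these procedures to produce a PEO $v_1,\ldots,v_n$ of the input chordal graph $G$.

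Given the PEO, I would enumerate candidate maximal cliques as follows. For each index $i$, set $C_i:=\{v_i\}\cup(N(v_i)\cap\{v_{i+1},\ldots,v_n\})$; each $C_i$ is a clique by the PEO property. I would then argue that every maximal clique $K$ of $G$ equals some $C_i$: let $v_i$ be the element of $K$ with smallest PEO-index; then $K\setminus\{v_i\}\subseteq N(v_i)\cap\{v_{i+1},\ldots,v_n\}$, so $K\subseteq C_i$, and maximality of $K$ forces equality. Hence there are at most $n$ maximal cliques in $G$, and scanning through the $C_i$'s and discarding any one that is properly contained in another gives the enumeration in polynomial time.

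For the clique tree, I would adopt the Bernstein--Goodman/Gavril construction: build the \emph{clique-intersection graph} whose vertices are the enumerated maximal cliques, weight each edge $\{C_i,C_j\}$ by $|C_i\cap C_j|$, and output any maximum-weight spanning tree $T$ via, e.g., Kruskal's algorithm, which runs in polynomial time. To verify that the resulting $(T,\{X_t\})$ really is a tree decomposition of $G$, the key property to establish is the \emph{induced-subtree property}: for every $v\in V(G)$, the set of tree-nodes whose bag contains $v$ induces a connected subtree of $T$. This implies the standard bag conditions of a tree decomposition, and by construction every bag is a maximal clique. The main obstacle is exactly this last step — certifying that a maximum-weight spanning tree in the clique-intersection graph of a chordal graph is automatically a clique tree. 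I would handle it by the classical exchange argument: assume for contradiction some $v\in V(G)$ has a disconnected ``contains-$v$'' set in $T$; take the lightest edge on a tree-path connecting two of its components and show, using chordality (equivalently, the Helly property of the later-neighbourhoods of a PEO), that it can be swapped for an edge of strictly greater intersection weight, contradicting optimality of $T$.
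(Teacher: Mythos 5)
The paper does not prove this statement at all: it is imported verbatim as a known result with a citation to Golumbic's book, and is used downstream purely as a black box (in Lemma~\ref{lem:cvc-easy-solvability-mixed-parameter} and Theorem~\ref{thm:chordal_main}). So there is no in-paper argument to compare against; what you have written is a reconstruction of the standard textbook proof, and it is essentially correct. The first two steps are fully sound: the Fulkerson--Gross characterization via perfect elimination orderings, computable in linear time by LexBFS or MCS, and the observation that every maximal clique arises as $C_i=\{v_i\}\cup(N(v_i)\cap\{v_{i+1},\ldots,v_n\})$ for $v_i$ the PEO-earliest vertex of the clique, which immediately gives the bound of at most $n$ maximal cliques and a polynomial-time enumeration. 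The third step (maximum-weight spanning tree of the clique-intersection graph yields a clique tree) is the one genuinely nontrivial ingredient, and you correctly flag it as such, but your sketch of the exchange argument is slightly off: the standard proof does not take the \emph{lightest} edge on the path between two bags containing $v$, but rather the edge $\{C_a,C_b\}$ on that path at which $v$ is first lost (say $v\in C_a$, $v\notin C_b$), and then uses the fact that $C_a\cap C_b$ is a $v$-avoiding separator to show $C_a\cap C_b\subseteq C_a\cap C'$ for the other endpoint bag $C'$ of the path, so that swapping in the edge $\{C',C_b\}$ (which reconnects the two components and gains at least the vertex $v$ in the intersection) strictly increases the weight. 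Since you are reproving a classical theorem that the paper itself only cites, this level of detail is acceptable, but if you want the argument to be self-contained you should spell out that separator step rather than appeal to ``the Helly property of later-neighbourhoods.''
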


\mypara{Parameterized algorithms and kernels.}
A parameterized problem $\Pi$ is a subset of $\Gamma^{*}\times \mathbb{N}$ for some finite alphabet $\Gamma$. An instance of a parameterized problem is a pair $(x,k)$, where $k$ is called the parameter and $x$ is the input. We assume that $k$ is {\em given in unary} and without loss of generality, $k\leq |x|$. 
The notion of {\em kernelization} is formally defined as follows. 

\begin{definition}{\rm [\sf Kernelization]} 
Let $\Pi\subseteq \Gamma^{*}\times \mathbb{N}$ be a parameterized problem and $g$ be a computable function.
We say that $\Pi$ {\em admits a kernel of size $g$} if there exists an algorithm, referred to as a {\em kernelization} (or a {\em kernel}) that, given $(x,k)\in \Gamma^{*}\times \mathbb{N}$, outputs in time polynomial in $|x|+k$, a pair $(x',k')\in \Gamma^{*}\times \mathbb{N}$ such that
\begin{itemize}
\item[(a)] $(x,k)\in \Pi$ 
if and only if $(x',k')\in \Pi$, and 
\item[(b)] $\max\{|x'|, k' \}\leq g(k)$.
\end{itemize}
When $g(k)=k^{\cO(1)}$, we say that $\Pi$ {\em admits a polynomial kernel}.
\end{definition}

\mypara{Parameterized optimization problems and approximate kernels.}

\begin{definition}{\rm \cite{LokshtanovPRS17}}\label{def:paraOptProblem}
A parameterized optimization (minimization or maximization) problem is a computable function
$\mathrm{\Pi}~:~\Sigma^*\times \mathbb{N}\times \Sigma^*\rightarrow {\mathbb R}\cup \{\pm\infty\}.$
\end{definition}
The {\em instances} of a parameterized optimization problem $\mathrm{\Pi}$ are pairs $(I,k) \in \Sigma^*\times \mathbb{N}$, and a {\em solution} to $(I,k)$ is simply a string $s \in \Sigma^*$, such that $|s| \leq |I|+k$. The {\em value} of the solution $s$ is $\mathrm{\Pi}(I,k,s)$. 

Since the problems we deal with in this paper are all minimization problems, we state some of the definitions only in terms of minimization problems when the definition for maximization problems is analogous. 
 As an illustrative example, we provide the definition of the parameterized optimization version of {\sc Connected Vertex Cover} parameterized by the solution size (see~\cite{LokshtanovPRS17}). This is a minimization problem, where the optimization function ${\sc CVC}:\Sigma^*\times \mathbb{N}\times \Sigma^*\rightarrow {\mathbb R}\cup \{\infty\}$ is defined as follows.

\[{\sc CVC}(G,k,S) = \left\{
\begin{array}{rl}
\infty & \text{if $S$ is not a connected vertex cover of $G$},\\
\text{min}\{|S|, k + 1\} & \text{otherwise.}
\end{array}
\right.
\]

\begin{definition}{\rm \cite{LokshtanovPRS17}}\label{def:paraOpt}
For a parameterized minimization problem $\mathrm{\Pi}$, the {\em optimum value} of an instance $(I,k) \in \Sigma^*\times \mathbb{N}$ is
%\begin{equation*}
$OPT_\mathrm{\Pi}(I,k) = \min_{\substack{s \in \Sigma^* \\ |s| \leq |I|+k}} \mathrm{\Pi}(I,k,s)$.
%\end{equation*}
\end{definition}

Consequently, in the case of {\sc Connected Vertex Cover} above, we define $\OPT(G,k)=\min_{S\subseteq V(G)}{\sc CVC}(G,k,S)$. 
We now recall other relevant definitions from \cite{LokshtanovPRS17} regarding {\em approximate kernels}.

\begin{definition}{\rm \cite{LokshtanovPRS17}}\label{def:polyTimePreProcessAppx}
Let $\alpha \geq 1$ be a real number and $\mathrm{\Pi}$ be a parameterized minimization problem. An {\em $\alpha$-approximate-polynomial-time preprocessing algorithm} ${\cal A}$ for $\mathrm{\Pi}$ is a pair of polynomial-time algorithms. The first one is called the {\em reduction algorithm}, and computes a map ${\cal R}_{\cal A} : \Sigma^* \times \mathbb{N} \rightarrow \Sigma^* \times \mathbb{N}$. Given as input an instance $(I,k)$ of $\mathrm{\Pi}$ the reduction algorithm outputs another instance $(I',k') = {\cal R}_{\cal A}(I,k)$.

The second algorithm is called the {\em solution lifting algorithm}. This algorithm
takes as input an instance $(I,k) \in \Sigma^* \times \mathbb{N}$ of $\mathrm{\Pi}$, the output instance $(I',k')$ of the reduction algorithm, and a solution $s'$ to the instance $(I',k')$. The solution lifting algorithm works in time polynomial in $|I|$,$k$,$|I'|$,$k'$ and $|s'|$, and outputs a solution $s$ to $(I,k)$ such that the following holds.

$$
\frac{\mathrm{\Pi}(I,k,s)}{OPT(I,k)} \leq \alpha \cdot \frac{\mathrm{\Pi}(I',k',s')}{OPT(I',k')}.
$$

The {\em size} of a polynomial-time preprocessing algorithm ${\cal A}$ is a function $\text{size}_{\cal A} : \mathbb{N} \rightarrow \mathbb{N}$ defined as $\text{size}_{\cal A}(k) = \sup\{|I'| + k' : (I',k') = {\cal R}_{\cal A}(I,k), I \in \Sigma^*\}$.
\end{definition}

\begin{definition}{\rm \cite{LokshtanovPRS17}}\label{def:appxkernel}~{\rm [\sf Approximate Kernelization]} 
An {\em $\alpha$-approximate kernelization} (or {\em $\alpha$-approximate kernel}) for a parameterized optimization problem $\mathrm{\Pi}$, and real $\alpha \geq 1$, is an $\alpha$-approximate-polynomial-time preprocessing algorithm ${\cal A}$ for $\mathrm{\Pi}$ such that $\text{size}_{\cal A}$ is upper bounded by a computable function $g : \mathbb{N} \rightarrow \mathbb{N}$. We say that ${\cal A}$ is an $\alpha$-approximate polynomial kernelization if $g$ is a polynomial function. 
\end{definition}

\begin{definition}{\rm \cite{LokshtanovPRS17}}\label{def:psaks}~{\rm [\sf Approximate Kernelization Schemes]} 
A {\em polynomial size approximate kernelization scheme} (PSAKS) for a parameterized optimization problem $\mathrm{\Pi}$ is a family of ${\alpha}$-approximate polynomial kernelization algorithms, with one such algorithm for every $\alpha > 1$.
\end{definition}

Now, an approximate kernelization scheme can be of different types.
We describe them as follows.

\begin{definition}{\rm [\sf Time Efficient PSAKS]} 
A PSAKS is said to be {\em time efficient} when both the followings hold.
\begin{itemize}
	\item The running time of the reduction algorithm is upper bounded by $f(\alpha)\cdot |I|^c$ for a function $f$ on $\alpha$ and some constant $c$ independent of $|I|,k$ and $\alpha$.
	\item The running time of the solution lifting algorithm is upper bounded by $g(\alpha)\cdot |I|^c$ for a function $g$ on $\alpha$ and some constant $c$ independent of $|I|,k$ and $\alpha$.
\end{itemize}
\end{definition}

\begin{definition}{\rm [\sf Size Efficient PSAKS]}
A {\em size efficient PSAKS} or simply an {\em efficient PSAKS (EPSAKS)} is a PSAKS such that the size of the output instance when run on $(I,k)$ with approximation parameter $\alpha$ can be upper bounded by $f(\alpha)k^{c}$ for a function $f$ on $\alpha$ and a constant $c$ independent of $|I|, k$ and $\alpha$.
\end{definition}

For further details on approximate kernelizations, we refer the reader to~\cite{LokshtanovPRS17}.

%\newpage
\section{{\CVC} parameterized by ${\cG}$-Deletion Number}
\label{sec:cvc-general-result}

This section is devoted to the proof of our main theorem, Theorem~\ref{thm:cvc-G-deletion-theorem}.
Recall that the decision version of {\cvcGdeletion} is formally defined as follows.

\begin{problem}[]{{\sc CVC parameterized by ${\cG}$-Deletion ({\cvcGdeletion})}}
  \Input & A graph $G$, a vertex set $S$ of size $k$ such that $G - S \in {\cG}$, integer $p$. \\
  \Parameter & \hspace{2 mm} $k$ \\
  \Prob & Does $G$ have a connected vertex cover of size at most $p$?
\end{problem}

Note that in our problem description, we explicitly require a ${\cG}$-deletion set to be given with the input.
The formal definition of the parameterized optimization version of this problem is as follows, where the input is the tuple $(G,S,k)$.

\smallskip
$CVC{\sf \mbox{-}{\cG}\mbox{-}Del}(G, S,k,T) = \begin{cases} -\infty &\mbox{if } |S| > k \mbox{ or } G - S \notin {\cG}\mbox{,}\\
\infty &\mbox{if } T \mbox{ is not a connected vertex cover of }G\mbox{,}\\
|T| &\mbox{otherwise.} \end{cases}$

We use $-\infty$ to denote the malformed input instances and $\infty$ to denote the infeasible solutions.
We need to assume that the recognition problem for the graph class ${\cG}$ is polynomial-time solvable in order to identify any malformed input instance. We let $\OPT (G)$ denote the size of a smallest connected vertex cover of a connected graph $G$.
When $G$ is clear from the context, we simply write $\OPT$.
The following observation is a property of optimal connected vertex covers we will use throughout the paper.

\begin{observation}
\label{obs:contraction-of-cvc}
Let $G$ be a connected graph and $S \subseteq V(G)$.
Let $\hat{G}$ be the graph obtained from $G$ by identifying the vertex set $S$ into ${\hu}$.
Then, $\OPT (\hat{G}) \leq \OPT (G)$.
\end{observation}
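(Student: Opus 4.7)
The plan is to exhibit a connected vertex cover of $\hat G$ whose size is at most $\OPT(G)$. Let $T^\star$ be an optimum connected vertex cover of $G$, so $|T^\star|=\OPT(G)$, and define
\[
T' \;=\; \begin{cases} T^\star & \text{if } T^\star \cap S = \emptyset, \\ (T^\star \setminus S) \cup \{\hu\} & \text{if } T^\star \cap S \neq \emptyset. \end{cases}
\]
In both cases $|T'|\leq |T^\star|$, so it suffices to verify that $T'$ is a vertex cover of $\hat G$ and that $\hat G[T']$ is connected.

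First I would check the vertex cover property by unpacking the definition of $\hat G$. Every edge of $\hat G$ either has both endpoints in $V(G)\setminus S$, in which case it is an edge of $G$ and is covered by $T^\star$ (and hence by $T'$, because such endpoints are untouched), or it is incident to $\hu$, in which case it arises from some edge $sv\in E(G)$ with $s\in S$ and $v\notin S$. Since $T^\star$ covers $sv$, either $s\in T^\star$, which forces the second branch of the definition and puts $\hu$ into $T'$, or else $v\in T^\star$ and $v\in T'$ directly. Either way the edge is covered.

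Next I would verify connectivity, which is the only nontrivial point. The subgraph $G[T^\star]$ is connected by assumption. In the first case $\hat G[T']=G[T^\star]$ since none of the vertices of $T^\star$ are affected by the identification and no edges among them are deleted, so connectivity is immediate. In the second case, $\hat G[T']$ is precisely the graph obtained from the connected graph $G[T^\star]$ by identifying the (possibly disconnected) subset $T^\star\cap S$ to the single vertex $\hu$; identifying any vertex subset of a connected graph preserves connectivity, because every walk in $G[T^\star]$ projects to a walk in $\hat G[T']$. Hence $\hat G[T']$ is connected.

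Combining these observations yields $\OPT(\hat G)\leq |T'|\leq |T^\star|=\OPT(G)$, as required. The only place that needs any care is the connectivity argument in the second case, and the key insight driving it is that contraction (or identification) cannot disconnect an already connected graph, regardless of whether the contracted set itself was connected in the larger graph.
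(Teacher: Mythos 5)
Your proof is correct and follows essentially the same route as the paper: take an optimal connected vertex cover $X^\star$ of $G$, replace $X^\star\cap S$ by $\hu$ when that intersection is nonempty, and observe that covering and connectivity are preserved because identification cannot disconnect a connected subgraph. The only (harmless) imprecision is the claim that $\hat G[T']$ is \emph{precisely} the identification of $G[T^\star]$ along $T^\star\cap S$; it may in fact contain additional edges to $\hu$ coming from neighbors of $S\setminus T^\star$, but since it contains that identification as a spanning subgraph, connectivity follows all the same.
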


\begin{proof}
Let $X^\star$ be an optimal connected vertex cover of the connected graph $G$.
By construction of $\hat{G}$, we know that $N_G(S) = N_{\hat G}(\hu)$.
If $X^\star \cap S = \emptyset$, then we know that $N_G(S) \subseteq X^\star$ and $X^\star$ is still a connected vertex cover of $\hat G$.
On the other hand, if $X^\star \cap S \neq \emptyset$, then we construct $\hat X = (X^\star \setminus S) \cup {\hu}$.
Clearly, all edges of $G$ incident to $S$ are incident to ${\hu}$.
Furthermore, identifying a set of vertices into a single vertex preserves connectivity and $|\hat X| \leq |X^\star|$.
So, $\hat X$ is still a connected vertex cover of $\hat G$.
Hence, $\OPT (\hat G) \leq |\hat X| \leq |X^\star| = \OPT (G)$.
\end{proof}

The above observation guarantees that the identification of a set of vertices does not increase the optimal solution size.
Using Observation~\ref{obs:contraction-of-cvc}, we prove the following lemma that will be crucial for the main theorem of this section, i.e. Theorem~\ref{thm:cvc-G-deletion-theorem}.
The following lemma guarantees that whenever the deletion set $S$ (this means that $G-S\in \cG$) is known to be sufficiently small compared to
a $2$-approximate connected vertex cover of $G$, we can compute a $(1+\eps)$-approximate connected vertex cover of $G$ {\em containing} $S$ in polynomial-time for every fixed $\eps > 0$.
Note that $\alpha = 1 + \eps$ and throughout the paper, we use $\alpha$ and $(1+\eps)$ interchangeably.
%Now, we formally state the lemma and prove it.

\begin{lemma}
\label{lem:easy-algorithm-G-deletion}
Let $0 < \eps < 1$ and $G$ be a connected graph and $S \subseteq V(G)$ such that $G - S \in {\cG}$.
If $|S| \leq \frac{\eps}{6}|L|$ where $L$ is a $2$-approximate connected vertex cover of $G$, then there exists a polynomial-time algorithm ${\cB}$ which takes as input $G$ and $S$ and satisfies the following properties. 

\begin{enumerate}	
	\item\label{case2:ptas}
If {\CVC} admits a {\sc PTAS} on the graph class ${\cG} + 1\cdot {\sf v}$, then Algorithm ${\cB}$ runs in time $f(1/{\eps})n^{g(1/{\eps})}$ for some computable functions $f$ and $g$ and outputs a connected vertex cover of $G$ that contains $S$ and whose size is at most $(1+\eps)~\OPT (G)$.

	\item\label{case1:eptas} 
	If {\CVC} admits an {\sc EPTAS} on the graph class ${\cG} + 1\cdot {\sf v}$, then Algorithm ${\cB}$ runs in time $f(1/{\eps})n^{\OO(1)}$ for some computable function $f$ and outputs a connected vertex cover of $G$ that contains $S$ and whose size is at most $(1+\eps)~\OPT (G)$.
\end{enumerate}
\end{lemma}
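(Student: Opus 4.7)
The plan is to pass to the graph $\hG$ obtained from $G$ by identifying $S$ into a single vertex $\hu$, invoke the assumed (E)PTAS on $\hG \in \cG + 1\cdot {\sf v}$, and lift the approximate solution back to $G$, patching any connectivity defect with the help of $L$.

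First I would form $\hG$ by identifying $S$ into $\hu$; then $\hG - \hu = G - S \in \cG$ so $\hG \in \cG + 1\cdot {\sf v}$, and Observation~\ref{obs:contraction-of-cvc} gives $\OPT(\hG) \le \OPT(G)$. I would then invoke the assumed (E)PTAS for \CVC\ on $\cG + 1\cdot {\sf v}$ applied to $\hG$ with error parameter $\eps' := \eps/3$, obtaining a connected vertex cover $\hat X$ of $\hG$ with $|\hat X| \le (1+\eps')\OPT(\hG) \le (1+\eps/3)\OPT(G)$. If $\hu \notin \hat X$, I would add it: since $\hat X$ is already a vertex cover of $\hG$, every neighbour of $\hu$ lies in $\hat X$, so $\hG[\hat X]$ stays connected and the size grows by at most one. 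I then lift to $G$ by setting $X_0 := (\hat X \setminus \{\hu\}) \cup S$. A direct check shows $X_0$ is a vertex cover of $G$: any edge of $G$ with both endpoints in $V(G)\setminus S$ is an edge of $\hG$ covered by $\hat X \setminus \hu$, and any edge incident to $S$ is covered by $S$. Using $|S| \le \frac{\eps}{6}|L| \le \frac{\eps}{3}\OPT(G)$ (the last bound follows from $|L| \le 2\OPT(G)$), we get $|X_0| \le (1+\eps/3)\OPT(G) + \frac{\eps}{3}\OPT(G) = (1+2\eps/3)\OPT(G)$.

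The catch is that $G[X_0]$ may not be connected: paths in $\hG[\hat X]$ that pass through $\hu$ correspond in $G$ to detours via $G[S]$, which can be disconnected. To patch this I would use $L$. Because $L$ is a vertex cover of $G$ and $G[L]$ is connected, $G[L \cup S]$ is also connected (every $s \in S \setminus L$ has a neighbour in $L$). I would compute a Steiner subtree $T$ of $G[L \cup S]$ spanning $S$ and output $X := X_0 \cup V(T)$. Then $X$ is a connected vertex cover of $G$ containing $S$: every component of $\hG[\hat X \setminus \{\hu\}]$ contains a vertex adjacent in $G$ to some $s \in S$ (it attaches to $\hu$ in the connected graph $\hG[\hat X]$), so every component of $G[X_0]$ meets $S$, and $T$ welds all of $S$ into a single component of $G[X]$.

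The main obstacle will be to bound $|V(T) \setminus S|$ (which lies inside $L$) by $\frac{\eps}{3}\OPT(G)$, so that the total size satisfies $|X| \le (1+\eps)\OPT(G)$. This is exactly where the hypothesis $|S| \le \frac{\eps}{6}|L|$ is essential: a careful choice of $T$ (for instance, a $2$-approximate minimum Steiner subtree of $G[L \cup S]$ with terminal set $S$), combined with the sparsity of $S$ inside $L$ and with the fact that $G[X_0]$ has at most $|S|$ connected components, should deliver the required estimate through a combinatorial argument that accounts for the attachment structure of these components in $L$. Everything else runs in polynomial time, so the overall running time is $f(1/\eps)\, n^{g(1/\eps)}$ under the PTAS assumption and $f(1/\eps)\, n^{\OO(1)}$ under the EPTAS assumption, matching both items of the lemma.
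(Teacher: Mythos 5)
Your first half follows the paper's proof exactly: identify $S$ into $\hu$, note $\OPT(\hG)\le\OPT(G)$ via Observation~\ref{obs:contraction-of-cvc}, run the (E)PTAS on $\hG$ with error $\eps/3$, lift to $X_0=(\hat X\setminus\{\hu\})\cup S$, and spend the budget $|S|\le\frac{\eps}{3}\OPT(G)$ once to pay for adding $S$. The gap is in the connectivity patch, which you yourself flag as the ``main obstacle'' and leave unresolved. Worse, the bound you are aiming for is false in general: a minimum Steiner subtree of $G[L\cup S]$ with terminal set $S$ can be forced to contain $\Omega(|L|)=\Omega(\OPT(G))$ Steiner vertices (e.g.\ when $G[L]$ is a long path and the vertices of $S$ attach only near its two ends), so no ``combinatorial argument'' will bound $|V(T)\setminus S|$ by $\frac{\eps}{3}\OPT(G)$. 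The hypothesis $|S|\le\frac{\eps}{6}|L|$ controls the \emph{number of terminals}, not the length of the paths in $L$ needed to join them, and routing the repair through $L$ is the wrong move: the cheap connectors generally live in $G-X_0$, which is disjoint from any vertex cover such as $L$.

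The paper's patch avoids $L$ entirely and is where the argument actually closes. Since $X_0$ is a vertex cover, $G-X_0$ is an independent set; hence, taking a shortest path in the connected graph $G$ between two distinct components of $G[X_0]$, all its internal vertices lie outside $X_0$ and there is exactly one of them, i.e.\ there is always a single vertex $v\in V(G)\setminus X_0$ adjacent to at least two components of $G[X_0]$. Adding $v$ reduces the number of components by at least one, and since $G[X_0]$ has at most $|S|+1$ components (which you essentially already established), at most $|S|\le\frac{\eps}{3}\OPT(G)$ vertices are added in total, giving $|X|\le(1+\frac{\eps}{3})\OPT(G)+\frac{2\eps}{3}\OPT(G)=(1+\eps)\OPT(G)$. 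Replacing your Steiner-tree step with this greedy merging argument repairs the proof; as written, the proposal is incomplete at its decisive step.
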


\begin{proof}
As $|S| \leq \frac{\eps}{6}|L|$ where $L$ is a $2$-approximate connected vertex cover of $G$, we have that $|S| \leq \frac{\eps}{3}\OPT (G)$.
Let ${\hG}$ be the graph obtained by identifying the vertex set $S$ into a single vertex ${\hu}$.

If {\CVC} admits an {\sc EPTAS} in ${\hG}$ (Condition~\ref{case1:eptas} holds true), then we know that given $0< {\eps} < 1$, there exists an algorithm $\cB$ that runs in time $f(1/{\eps})n^{\OO(1)}$ and outputs a connected vertex cover $X^\star$ of size at most $(1+\frac{\eps}{3})\OPT (\hG)$.

If {\CVC} admits a {\sc PTAS} in ${\hG}$  (Condition~\ref{case2:ptas} holds true), then we know that given $0 < \eps < 1$, there exists an algorithm $\cB$ that runs in time $f(1/{\eps})n^{g(1/{\eps})}$ and outputs a connected vertex cover $X^\star$ of size at most $(1+\frac{\eps}{3}) \OPT (\hG)$.

We know by Observation~\ref{obs:contraction-of-cvc} that $\OPT (\hG) \leq \OPT (G)$.
Let us consider the set $X = (X^\star \setminus \{{\hu}\}) \cup S$.
As $N_{\hat{G}}({\hu}) = N_G(S)$, we know that the set of edges of $G$ incident on $S$ are covered by $X$.
So, irrespective of whether ${\hu} \in X^\star$ or ${\hu} \notin X^\star$, we have the fact that $X$ is a vertex cover of $G$.
But $G[X]$ need not be a connected subgraph of $G$.
First, we have to ensure that there are at most $|S| + 1$ connected components in $G[X]$.
There are two cases.

\begin{enumerate}
	\item First, we consider the case when ${\hu} \in X^\star$ and $|X^\star| \geq 2$, we know that ${\hu}$ has at least one neighbor in $X^\star$.
So, at least one vertex of $S$ has at least one neighbor in $X^\star \setminus \{{\hu}\}$.
Otherwise, $X^\star = \{\hu\}$ and in such situation, we have $X = S$.
Hence, irrespective of $|X| = 1$ or $|X| \geq 2$, the graph $G[X]$ has at most $|S|$ connected components.

	\item Now, we consider the case when ${\hu} \notin X^\star$.
Then, we know that $X = (X^\star \setminus \{\hu\}) \cup S = X^\star \cup S$.
So, $G[X]$ has at most $|S| + 1$ connected components as $G[X^\star]$ is already connected.
\end{enumerate}

Now, we have that $G[X]$ has at most $|S| + 1$ connected components.
In order to convert $X$ into a connected subgraph of $G$, we now add some additional vertices from $G - X$.
We know that $G - X$ is an independent set and $G$ is connected.
If $G[X]$ has more than one connected component, there exists a vertex $v \in V(G - X)$ such that $v$ is adjacent to at least two different connected components of $G - X$.
We find such a vertex and add it to $X$.
We continue this process until we have that $X$ induces a connected subgraph.
As there are at most $|S| + 1$ connected components in $G[X]$, we will need to repeat this step at most $|S|$ times.
So, the size of the final connected vertex cover we generate is at most $|X| + |S| \leq |X^\star| + 2|S| \leq (1 + \frac{\eps}{3})\OPT (\hat{G}) + 2|S| \leq (1 + \frac{\eps}{3})\OPT (G) + \frac{2\eps}{3} \OPT (G) = (1+\eps)\OPT (G)$.

Now note that if Condition~\ref{case1:eptas} holds true, then the running time of the algorithm to compute $X^\star$ is $f(1/{\eps})n^{\OO(1)} + n^{\OO(1)}$.
On the other hand if Condition~\ref{case2:ptas} holds true, then the running time of the algorithm to compute $X^\star$ is $f(1/{\eps})n^{g(1/{\eps})} + n^{\OO(1)}$.
All subsequent steps of our algorithm run in polynomial-time.
As a result, this proves our claimed bounds on the running times and completes the proof of the lemma.
\end{proof}

Now, we consider the case when $|S| > (\eps/6) |L|$, where $L$ is a $2$-approximate connected vertex cover of $G$.
In this case, we have that $|S| > (\eps/6)\OPT(G)$.
So, we know that $\OPT(G) < \ceil{6k/\eps}$.
In that case, we can modify the PSAKS provided by Lokshtanov et al.~\cite{LokshtanovPRS17}, but with parameter value $\ceil{6k/\eps}$.
We give a proof of this in the following lemma.

\begin{lemma}
\label{lemma:high-G-deletion-parameter}
Let $(G, S, k)$ be the given instance of {\cvcGdeletion}.
Given a $2$-approximate connected vertex cover of $G$, say $L$, if $|S| > (\eps/6)|L|$, then one can compute a graph $G'$ in polynomial-time such that the following statements hold. 
\begin{enumerate} 
    \item $|V(G')|=k^{\OO(1/\eps)}$,
    \item $G'$ has a connected vertex cover of size at most $(1+\eps)\OPT $.
    \item Every inclusion-wise minimal connected vertex cover of $G'$ is a connected vertex cover of $G$.
\end{enumerate}
\end{lemma}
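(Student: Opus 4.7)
The plan is to leverage the PSAKS for {\CVC} parameterized by the solution size, due to Lokshtanov et al.~\cite{LokshtanovPRS17}. The crucial observation is that the hypothesis $|S|>(\eps/6)|L|$ forces $\OPT (G)$ to be bounded by a function of $k$ alone, so that the standard solution-size parameterization already suffices to reach the desired kernel size.

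First, I would translate the hypothesis into a bound on $\OPT (G)$. Since $L$ is a $2$-approximate connected vertex cover, $\OPT (G)\le |L|\le 2\OPT (G)$, and combined with $|S|>(\eps/6)|L|$ and $|S|\le k$ this yields
\[\OPT (G)\ \le\ |L|\ <\ \frac{6|S|}{\eps}\ \le\ \frac{6k}{\eps}.\]
Setting $p:=\lceil 6k/\eps\rceil$, we may therefore treat $(G,p)$ as a valid instance of the parameterized optimization version of {\CVC} with solution-size parameter $p$, since $\OPT (G)\le p$.

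Next, I would run the Lokshtanov et al. PSAKS on $(G,p)$ with approximation parameter $1+\eps$. Their reduction algorithm outputs a graph $G'$ whose size is polynomial in $p$, concretely $p^{\OO(1/\eps)}=k^{\OO(1/\eps)}$, which gives the first claim. The second claim---that $G'$ itself admits a CVC of size at most $(1+\eps)\OPT (G)$---follows by applying the defining inequality of an $(1+\eps)$-approximate kernel to an optimal CVC of $G'$: the lifted solution is a CVC of $G$ of cost at most $(1+\eps)\OPT (G)$, and tracing the Lokshtanov et al.~construction (which essentially proceeds by contracting structured induced subgraphs and appending small gadgets) back gives $\OPT (G')\le (1+\eps)\OPT (G)$.

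The hard part will be the third claim: every inclusion-wise minimal CVC of $G'$ must itself be a CVC of $G$. This is stronger than what the Lokshtanov et al.~kernel guarantees, because their lifting step is allowed to \emph{post-process} a solution of $G'$ before returning it to $G$. To upgrade the construction, I would attach a tiny forcing gadget (for example a pendant $P_2$) to each auxiliary vertex introduced by the reduction and to each contraction image, so that any inclusion-wise minimal CVC of $G'$ is compelled to contain exactly those vertices which are needed to cover and connect the edges of $G$ that were hidden inside a contracted subgraph. The bookkeeping to check is that these gadgets add only $\OO(|V(G')|)=k^{\OO(1/\eps)}$ new vertices and inflate $\OPT (G')$ by at most a $(1+\eps)$ factor, so that the size bound of the first claim and the approximation guarantee of the second claim are preserved.
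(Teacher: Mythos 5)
Your proposal matches the paper's proof: both derive $\OPT(G) < \lceil 6k/\eps\rceil$ from the hypothesis and then invoke the Lokshtanov et al.\ solution-size PSAKS with parameter $k'=\lceil 6k/\eps\rceil$ to obtain $G'$ of size $(k')^{\OO(1/\eps)}=k^{\OO(1/\eps)}$. The extra forcing gadgets you propose for the third claim are already present in that construction --- it ends by attaching pendants to every vertex of the (modified) high-degree set $H$, which is exactly what makes every inclusion-wise minimal connected vertex cover of $G'$ lift back to one of $G$ --- so no genuine modification is needed.
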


\begin{proof}
If $|S| > \frac{\eps}{6}|L|$, then it follows that $|S| > \frac{\eps}{6}\OPT$.
That is, $\OPT < \ceil{6k/\eps}$.
The graph $G'$ is the output of a slightly modified version of the PSAKS for {\CVC} parameterized by solution size~\cite{LokshtanovPRS17} but with the slightly different parameter value $k' = \ceil{6k/\eps}$.
Since, $k'$ is linearly bounded in $k$, we will be able to conclude that $|V(G')|$ is $k^{\OO(1/{\eps})}$.
We present a brief sketch of the construction of $G'$ for the sake of completeness.

Let $H$ denote the set of vertices of $G$ whose degree is at least $k'+1$. Observe that every vertex cover of $G$ of size at most $k'$ must contain every vertex in $H$. 
Moreover, since $G$ has a vertex cover of size $\OPT$ and $\OPT$ is upper bounded by $k'$, we conclude that $|H|\leq k'$.

Let $R$ denote the set of vertices in $V(G)\setminus H $ which have at least one neighbor which is not in $H$. Since all such vertices have degree at most $k'$ and $G$ has a vertex cover of size at most $k'$, we conclude that $|R|\leq (k')^2$. 

Let $I=V(G)\setminus (H\cup R)$. By definition, $G[I]$ is an independent set. 
The PSAKS of Lokshtanov et al.~\cite{LokshtanovPRS17} uses as a subroutine an efficient algorithm that computes a set $I'\subseteq I$ of size $(k')^{\bigoh(1/\eps)}$ such that $G'=G[I'\cup H\cup R]$ satisfies the properties required by the lemma.
This subroutine essentially does the following. First of all, for every $h\in H$, if $h$ has at most $k'+1$ neighbors in $I$, then it marks all of these neighbors. Otherwise, it marks an arbitrarily chosen set of $k'+1$ neighbors of $h$ in $I$. This is done simply to preserve the status of vertices in $H$ as vertices of degree at least $k'+1$. It then repeatedly executes the following step as long as possible. If there is an unmarked vertex $v\in I$ such that $v$ is neighbor to at least $1/\eps$ distinct {\em connected components} of $G[H]$, then mark this vertex, contract all edges incident on $v$, and add the resulting new vertex to $H$. Observe that this will reduce the number of connected components of $G[H]$ by at least $1/\eps-1$ and so this step will not be repeated more than $\eps\cdot |H|$ times. When this procedure terminates, we go over the remaining unmarked vertices in $I$ and for every set of vertices with the same neighborhood in $H$, mark one of these vertices and remove all others. Since any surviving vertex has degree at most $1/\eps$ into the (modified) set $H$, the number of vertices marked in this procedure is $(k')^{\bigoh(1/\eps)}$ these vertices form the set $I'$. Finally, we add pendants to every vertex in $H$ to force them into every connected vertex cover of $G'$. 
It follows from the definition of $I'$ that a connected vertex cover $Z$ of $G$ of size $\OPT$ can be converted to one of size at most $(1+\eps)|Z|$ for $G'$ and consequently, every inclusion-wise minimal
 connected vertex cover of $G'$ (which must be disjoint from the pendant vertices added in the end) is also a connected vertex cover of $G$.
This completes the proof of the lemma.
\end{proof}

We are now ready to combine Lemma~\ref{lem:easy-algorithm-G-deletion} and Lemma~\ref{lemma:high-G-deletion-parameter} to obtain Theorem~\ref{thm:cvc-G-deletion-theorem}.

\CVCGeneralTheorem*

\begin{proof}
Let $(G, S, k)$ be the given instance of {\cvcGdeletion}, where $G - S \in {\cG}$.
We first state the approximate kernelization algorithm.
Recall that the approximate kernelization algorithm must have two parts.
The first part is the {\em Reduction Algorithm} and the second is the {\em Solution Lifting Algorithm}.
\begin{itemize}
  \item {\bf Reduction Algorithm:}
 We use the algorithm by Savage~\cite{Savage82} to compute a 2-approximate connected vertex cover of $G$, call it $L$. If 
$|S|\leq \frac{\eps}{6}\cdot |L|$, then we return a trivial instance (of constant size) $(G',S,k')$ of {\cvcGdeletion} and otherwise, we invoke Lemma~\ref{lemma:high-G-deletion-parameter} to compute the subgraph $G'$ and return the instance $(G',S,k')$, where $k'=\lceil 6k/\eps\rceil$. Note that if $S$ is not completely contained in $G'$, then we may simply add it back to $G'$. Since $G'$ is a subgraph of $G$ and $\cG$ is hereditary, we know that $S$ is also a $\cG$-deletion set of $G'$.
Clearly, the {\em size} of the output satisfies the required bound. It only remains to prove the correctness of the reduction by providing a solution lifting algorithm. 

\medskip

    \item {\bf Solution Lifting Algorithm:}
    Recall that the solution lifting algorithm has access to $G$. We may also assume without loss of generality that it has access to the set $L$, which was computed by the reduction algorithm. Let $Q$ be the given $c$-approximate solution for the instance output by the reduction algorithm. We may assume without loss of generality that $Q$ is inclusion-wise minimal. 

If $|S|\leq \frac{\eps}{6}\cdot |L|$, then we ignore the set $Q$ and invoke Lemma \ref{lem:easy-algorithm-G-deletion} to compute and return a $(1+\eps)$-approximate connected vertex cover of $G$. On the other hand, if $|S|> \frac{\eps}{6}\cdot |L|$ we simply return $Q$ and use Lemma \ref{lemma:high-G-deletion-parameter}~(2) and Lemma \ref{lemma:high-G-deletion-parameter}~(3) to conclude that $|Q| \leq c (1+\eps)\cdot \OPT$.
\end{itemize}

There are two cases, one where {\CVC} admits a {\sc PTAS} and the other where {\CVC} admits an {\sc EPTAS} on every graph $G \in {\GG} + 1\cdot {\sf v}$.
In each case, we have managed to convert a $c$-approximate solution of $G'$ to a $c(1+\eps)$-approximate solution of $G$.
Now, we prove the items in the given order.
\begin{enumerate}
	\item Suppose that {\CVC} admits a {\sc PTAS} on $G \in {\GG} + 1\cdot {\sf v}$ (Condition~\ref{case2:ptas} holds).
	If $|S| \leq \frac{\eps}{6}|L|$, then by Lemma~\ref{lem:easy-algorithm-G-deletion}, we can find a connected vertex cover of size at most $(1+\eps)\OPT (G)$ in $f(1/{\eps})n^{g(1/{\eps})}$ time.
	So, based on that we have a PSAKS running in $f(1/{\eps})n^{g(1/{\eps})}$ time.
	
	\item On the other hand, suppose that {\CVC} admits an {\sc EPTAS} on $G \in {\GG} + 1\cdot {\sf v}$. If $|S| \leq \frac{\eps}{6}|L|$, we know by Lemma~\ref{lem:easy-algorithm-G-deletion} that there exists an algorithm that runs in time $f(1/{\eps})n^{\OO(1)}$ that computes a $(1+\eps)$-approximate connected vertex cover of $G$.
	So, in this case, the reduction algorithms and solution lifting algorithms run in $f(1/{\eps})n^{\OO(1)}$ time.
	Hence, we have a time efficient PSAKS.
\end{enumerate}
This completes the proof of the theorem.
\end{proof}

\subsection{{\CVC} parameterized by $\eta$-Transversal Number}
\label{sec:cvc-treewidth}
Now, we consider the problem {\cvctreewidth} whose decision version is as follows.

\begin{problem}[]{{\sc CVC parameterized by $\eta$-transversal~({\cvctreewidth})}}
  \Input & A graph $G$, an $\eta$-transversal $S$ of size $k$, integer $p$. \\
  \Parameter & \hspace{2 mm} $k$ \\
  \Prob & Does $G$ have a connected vertex cover of size at most $p$?
\end{problem}

The optimization version of {\cvctreewidth} is as follows.

$CVC{\sf \mbox{-}{\eta}\mbox{-}tvl}(G, S,k,T) = \begin{cases} -\infty &\mbox{if } |S| > k \mbox{ or } \tw (G - S) > \eta\mbox{,}\\
\infty &\mbox{if } T \mbox{ is not a connected vertex cover of }G\mbox{,}\\
|T| &\mbox{otherwise.} \end{cases}$

We know that {\CVC} is polynomial-time solvable on graphs of bounded treewidth.
Let $(G, S, k)$ be a given instance where $|S| \leq k$ and $\tw (G - S) \leq \eta$ for some constant $\eta$.
So, we have Corollary~\ref{cor:cvc-treewidth-result} as a consequence of Theorem~\ref{thm:cvc-G-deletion-theorem}.

\TreewidthResult*

\begin{proof}
It is known due to Flum and Grohe~\cite{FG06pcbook} that there exists a polynomial-time algorithm that can check if a graph has constant treewidth.
From the assumption, $\tw (G - S) \leq \eta$.
Let $F = G - S$ and consider the graph $F + 1\cdot {\sf v}$.
We know that $\tw (F + 1\cdot {\sf v}) \leq \eta + 1$.
We know from Courcelle's Theorem~\cite{Courcelle90} that there exists a linear time algorithm that outputs a minimum connected vertex cover of $F + 1\cdot {\sf v}$.
Now, using Theorem~\ref{thm:cvc-G-deletion-theorem}, we see that {\cvctreewidth} admits a time efficient PSAKS with $k^{\OO(1/{\eps})}$ vertices.
\end{proof}

\subsection{{\CVC} parameterized by Chordal Deletion Number}
\label{sec:cvc-chordal-deletion}
Now, we consider the following problem {\cvcchordal} whose decision version is stated as follows.

\begin{problem}[]{{\sc CVC parameterized by chordal deletion number~({\cvcchordal})}}
  \Input & A graph $G$, a chordal vertex deletion set $S$ of size $k$, integer $p$. \\
  \Parameter & \hspace{2mm} $k$ \\
  \Prob & Does $G$ have a connected vertex cover of size at most $p$?
\end{problem}

Note that as in the case of {\cvcGdeletion}, we explicitly require a chordal vertex deletion set to be provided as part of the input. However, this requirement can be removed and replaced with an execution of the polynomial-time factor-$\OPT^{\bigoh(1)}$ approximation algorithm for {\sc Chordal Vertex Deletion} of Jansen and Pilipczuk~\cite{JP17}. 

The formal definition of the parameterized optimization version of this problem is as follows:

\smallskip
$CVC\textsf{-chordal}(G,S,k,T) = \begin{cases} -\infty &\mbox{if } |S| > k \mbox{ or some component of } G-S \notin \textsf{Chordal}\mbox{,}\\
\infty &\mbox{if } T \mbox{ is not a connected vertex cover of }G\mbox{,}\\
|T| &\mbox{otherwise.} \end{cases}$

\smallskip

Let $\eta=2+\lceil \frac{1}{\eps}\rceil$.
We apply the following reduction rule.

\begin{reduction rule}
\label{red:compress-maximal-clique}
Let $(G,S,k)$ be the given instance of {\cvcchordal}, where $G-S$ is a chordal graph. 
Let $C\subseteq V(G\setminus S)$ such that $|C| \geq \eta$ and $G[C]$ is a maximal clique.
Contract the edges of $G[C]$ to obtain a new vertex $u_C$ and add a pendant vertex $v_C$ adjacent to $u_C$.
Let $G'$ denote the graph resulting from this operation.
\end{reduction rule}

The intuition behind this reduction rule comes from the fact that since any vertex cover must contain all but at most one vertex of this clique, we can also force the remaining vertex of the clique (if one exists) into the solution at the cost of a small but manageable error without violating the connectivity requirement.

\begin{lemma}
\label{lem:safeness-compress-maximal-clique-rule}
Let $G'$ be the resulting graph after applying Reduction Rule~\ref{red:compress-maximal-clique} on $G$.
Then, there is a polynomial-time algorithm that, given a $c$-approximate connected vertex cover of $G'$, returns a connected vertex cover of $G$ whose size is at most $\beta\cdot \OPT(G)$ where $\beta= \max\{c,(1+\eps)\}$.
\end{lemma}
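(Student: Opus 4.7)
The plan is to describe a polynomial-time solution-lifting procedure and then bound the size of its output by splitting on whether $c \geq 1+\eps$ or $c \leq 1+\eps$. Given the $c$-approximate CVC $Q'$ of $G'$, I would first trim it in polynomial time into an inclusion-wise minimal CVC of $G'$ of no larger size (hence still $c$-approximate). The pendant $v_C$ then forces $u_C \in Q'$ and $v_C \notin Q'$, because the only edge incident to $v_C$ is $u_C v_C$, so $v_C$ becomes removable once $u_C$ is in the set. I define the lifted solution
$$Q \;=\; (Q' \setminus \{u_C\}) \cup C,$$
of size exactly $|Q'| + |C| - 1$, and verify it is a CVC of $G$: edges inside $C$ are covered because $C \subseteq Q$; edges from $C$ to the rest are covered by their $C$-endpoint; edges lying entirely outside $C$ coincide with edges of $G'$ and are covered by $Q' \setminus \{u_C\}$. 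For connectivity, $G[C]$ is a clique, and any path in $G'[Q']$ from some $w \in Q' \setminus \{u_C\}$ to $u_C$ can be re-routed in $G[Q]$ to end at the $C$-vertex realizing its final edge.

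Next I would bound $\OPT(G')$ in terms of $\OPT(G)$. Take an optimal CVC $Z^\star$ of $G$. Since $C$ is a clique of size at least $\eta \geq 3$, vertex-cover forces $|Z^\star \cap C| \geq |C|-1$. Setting $Z' = (Z^\star \setminus C) \cup \{u_C\}$ gives a CVC of $G'$: every edge of $G'$ incident to $u_C$ is covered by $u_C$; every other edge coincides with an edge of $G$ outside $C$ and is covered by $Z^\star \setminus C$; and connectivity is preserved because contracting a nonempty subset of the connected induced subgraph $G[Z^\star]$ to a single vertex cannot disconnect it. Therefore
$$\OPT(G') \;\leq\; |Z'| \;=\; |Z^\star| - |Z^\star \cap C| + 1 \;\leq\; \OPT(G) - |C| + 2.$$

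The final step is arithmetic, exploiting $|C| \geq \eta = 2 + \lceil 1/\eps\rceil$ together with the induced bound $\OPT(G) \geq |C|-1 \geq 1 + \lceil 1/\eps\rceil$. Combining $|Q'| \leq c \cdot \OPT(G')$ with the bound on $\OPT(G')$ yields $|Q| \leq c \cdot \OPT(G) - (c-1)|C| + (2c-1)$. If $c \geq 1+\eps$, proving $|Q| \leq c \cdot \OPT(G)$ reduces to $|C| \geq 2 + 1/(c-1)$, which follows from $c - 1 \geq \eps$ and $|C| \geq 2 + 1/\eps$. If $c \leq 1+\eps$, I would instead use $c \cdot \OPT(G') \leq (1+\eps) \OPT(G')$ and aim for $|Q| \leq (1+\eps) \OPT(G)$; substituting the bound on $\OPT(G')$ reduces this to $\eps |C| \geq 1 + 2\eps$, which again follows from $|C| \geq 2 + 1/\eps$. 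The main obstacle is this two-case arithmetic, and in particular recognizing that the chosen threshold $\eta = 2 + \lceil 1/\eps\rceil$ is precisely what simultaneously forces both inequalities to hold; any smaller value of $\eta$ would make one of the two sub-cases fail.
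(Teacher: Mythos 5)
Your proposal is correct and follows essentially the same route as the paper's proof: the same lifting map $Q=(Q'\setminus\{u_C\})\cup C$, the same key bound $\OPT(G')\le \OPT(G)-|C|+2$ derived from the fact that any vertex cover of $G$ contains at least $|C|-1$ vertices of the clique. The only difference is cosmetic: where you split into the cases $c\ge 1+\eps$ and $c\le 1+\eps$, the paper packages the same arithmetic into the single inequality $\frac{|D'|+\eta-1}{\OPT'+\eta-2}\le \max\bigl\{\frac{|D'|}{\OPT'},\frac{\eta-1}{\eta-2}\bigr\}$.
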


\begin{proof}
Let $D'$ be the given connected vertex cover of $G'$ and let $\OPT'$ denote the size of a smallest connected vertex cover of $G'$. Recall that $\OPT=\OPT(G)$ denotes the size of a smallest connected vertex cover of $G$.
Note that $u_C \in D'$ because it has a pendant neighbor $v_C$. In addition, since $u_C$ is the unique neighbor of $v_C$, we may assume without loss of generality that $v_C\notin D'$. 
We now argue that $D = (D' \setminus \{u_C\}) \cup C$ is the required connected vertex cover of $G$.
It is straightforward to see that $D$ is a connected vertex cover of $G$. 

It remains to prove that $|D|\leq \beta \cdot \OPT$. 
Since any vertex cover of $G$ must contain at least $|C|-1$ vertices of the clique $C$, 
we infer that $\OPT \geq \OPT' + |C| - 2=\OPT'+\eta-2$.
And by definition, $|D| \leq |D'| + \eta - 1$. Combining these two inequalities, we obtain the following. 

$$\frac{|D|}{\OPT} \leq \frac{|D'| + \eta - 1}{\OPT' + \eta - 2} \leq \textsf{max}\Bigg\{\frac{|D'|}{\OPT'}, \frac{\eta-1}{\eta-2}\Bigg\} \leq \textsf{max}\Big\{c,(1+\eps)\Big\}=\beta$$

Hence we conclude that $|D|\leq \beta\cdot \OPT$. This completes the proof of the lemma. 
\end{proof}

When Reduction Rule~\ref{red:compress-maximal-clique} is not applicable, it must be the case that all the maximal cliques of $G - S$ are of size at most $\eta = 2 + \ceil{1/\eps}$. This gives us the following lemma.

\begin{lemma}
\label{lem:no_large_cliques}
(See~\cite{combinatorial-optimization,Golumbic})
	If $G$ has a set $Z$ such that $G-Z$ is chordal and $G-Z$ has no cliques of size $\eta$, then $Z$ is also an $(\eta-2)$-transversal for $G$.
\end{lemma}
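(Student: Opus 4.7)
The plan is to invoke the classical structural result that for a chordal graph, treewidth equals one less than the clique number. Since the excerpt already records that the maximal cliques of a chordal graph form a clique tree decomposition (Theorem~\ref{lemma:chordal-clique-tree}), the lemma follows almost immediately from this decomposition once one bounds the bag size by the absence of large cliques.

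Concretely, I would first set $H = G - Z$ and note that $H$ is chordal by hypothesis. Apply Theorem~\ref{lemma:chordal-clique-tree} to obtain, in polynomial time, a clique tree decomposition $\mathcal{T} = (T, \{X_t\}_{t \in V(T)})$ of $H$ in which each bag $X_t$ induces a maximal clique of $H$. This is a valid tree decomposition of $H$ in the sense of the definitions from Section~\ref{sec:prelims}.

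Next, I would use the no-large-clique assumption. By hypothesis, $H$ contains no clique of size $\eta$, so every clique of $H$ has at most $\eta - 1$ vertices; in particular every maximal clique of $H$ has at most $\eta - 1$ vertices. Since each bag of $\mathcal{T}$ is a maximal clique of $H$, we get $|X_t| \leq \eta - 1$ for all $t \in V(T)$. Consequently the width of $\mathcal{T}$, which is defined as $\max_{t \in V(T)} |X_t| - 1$, is at most $\eta - 2$. Hence $\tw(H) = \tw(G - Z) \leq \eta - 2$, which by the definition of an $\eta$-transversal means $Z$ is an $(\eta-2)$-transversal of $G$.

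There is essentially no obstacle here beyond correctly quoting the clique-tree structure of chordal graphs; the only thing to be careful about is the book-keeping between ``clique of size $\eta$'' (forbidden) and ``width $\eta - 2$'' (which comes from the $-1$ in the definition of width). Everything else is a direct unpacking of definitions from Section~\ref{sec:prelims} and Theorem~\ref{lemma:chordal-clique-tree}.
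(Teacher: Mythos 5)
Your proof is correct and takes essentially the same route as the paper: both rest on the standard fact that a chordal graph's treewidth is one less than its clique number, which the paper simply cites while you unpack it via the clique tree decomposition of Theorem~\ref{lemma:chordal-clique-tree}. Your off-by-one bookkeeping (no clique of size $\eta$, hence bags of size at most $\eta-1$, hence width at most $\eta-2$) is exactly right.
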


\begin{proof}
	The lemma follows from the fact that the treewidth of the chordal graph $G-Z$ is upper bounded by the size of the maximum clique in $G-Z$ (see~\cite{Golumbic}), which is at most $\eta-2$.
	As a result, $Z$ is also an $(\eta - 2)$-transversal of $G$.
\end{proof}

Now, combining Corollary~\ref{cor:cvc-treewidth-result}, Lemma~\ref{lem:safeness-compress-maximal-clique-rule} and Lemma~\ref{lem:no_large_cliques}, we are ready to prove Theorem~\ref{thm:chordal_main}.

\chordalmaintheorem*

\begin{proof}

Let $(G,S,k)$ be the given instance of the optimization version of {\cvcchordal}. Recall that $\eta=2+\lceil \frac{1}{\eps}\rceil$. We apply Reduction Rule~\ref{red:compress-maximal-clique} exhaustively on $G-S$ to obtain a graph $G'$ such that $S$ is a chordal vertex deletion set of $G'$ and $G'-S$ has no cliques of size $\eta$. Moreover, it follows from the description of the rule that if $G$ is connected, so is $G'$.
Observe that Reduction Rule~\ref{red:compress-maximal-clique} is approximation preserving.
Hence, by Lemma \ref{lem:safeness-compress-maximal-clique-rule}, it follows that any given $c$-approximate connected vertex cover of $G'$ can be converted to a connected vertex cover of $G$ whose size is at most $\beta\cdot \OPT(G)$ in polynomial-time, where $\beta =\max\{c,(1+\eps)\}$.
In addition, from Lemma \ref{lem:no_large_cliques}, we know that $\tw(G-S)\leq \eta$, implying that we may now treat $(G',S,k)$ as a meaningful instance of {\cvctreewidth}.
Also from Theorem~\ref{lemma:chordal-clique-tree}, the number of maximal cliques in a chordal graph is polynomially upper bounded.
Furthermore, the clique tree decomposition can also be found in polynomial-time.
Hence, all these procedures now run in polynomial-time. 

We now invoke Corollary~\ref{cor:cvc-treewidth-result} and return as our output, the output of the associated PSAKS when given the instance $(G',S,k)$ as input. The correctness as well as the upper bound on the size of the returned output follow from those of Corollary~\ref{cor:cvc-treewidth-result}.
We note that even if $k$ is the size of a smallest chordal vertex deletion set of the original input graph $G$ and $S$ is only a factor-$\OPT^{\bigoh(1)}$ approximation, it follows that the size of the output is still bounded polynomially in $k$, since $|S|$ would be upper bounded polynomially in $k$. This completes the proof of the theorem.
\end{proof}

%\section{{\sc Connected Vertex Cover} parameterized by $\eta$-transversal number}
%\input{cvc-eta-transversal.tex}
%\newpage
\section{{\CVC} parameterized by the deletion distance to split graphs or cographs}\label{sec:diameter}

In this section we present a PSAKS for {\CVC} parameterized by the size of a minimum deletion set into a disjoint union of split graphs and cographs.

\begin{lemma}
\label{Lem:characterization-of-Fd}
Let $G$ be a graph such that there exists a vertex $u$ such that every connected component of $G - \{u\}$ is either a split graph or a cograph.
Then, {\CVC} is polynomial-time solvable on $G$.
Furthermore, there exists a polynomial-time algorithm that outputs a smallest connected vertex cover containing $u$.
\end{lemma}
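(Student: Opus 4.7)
The plan is to first observe that when $u$ is required to belong to the connected vertex cover, the problem decouples across the connected components of $G-u$. Concretely, if $u \in X$, then $X$ is a connected vertex cover of $G$ if and only if for every connected component $C$ of $G - u$, the set $X_C := X \cap V(C)$ is a vertex cover of $G[V(C)]$ and every connected component of $G[X_C]$ contains at least one vertex of $N_G(u)$; the reverse direction uses the fact that $u$ is the unique conduit gluing the various $G[X_C]$'s together in $G[X]$. Consequently, computing a minimum connected vertex cover of $G$ containing $u$ reduces to solving, for each component $C$ of $G-u$, the following subproblem in polynomial time: find a minimum vertex cover of $G[V(C)]$ with the additional property that every one of its induced connected components meets $N_G(u)$.

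For a split-graph component $C$ with clique $K$ and independent set $I$, I would iterate over the at most $|K|+1$ choices of which vertex of $K$ (possibly none) is excluded from $X_C$. Each such choice uniquely pins down the subset of $I$ that must be included (namely the neighbors in $I$ of the excluded clique vertex, in order to cover the incident edges), and the induced subgraph $G[X_C]$ then has a very transparent component structure: one ``big'' component built around $K$ or $K \setminus \{v\}$, together with at most a few singletons arising from $I$-vertices whose only $K$-neighbor was excluded. The ``every component meets $N_G(u)$'' condition is then checked directly; any violating singleton renders the current branch infeasible, while a violating big component can be repaired by adding a single further vertex from $N_G(u) \cap V(C)$ (if such a vertex lies in the big component). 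The minimum over the $|K|+1$ branches yields $X_C$ in polynomial time.

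For a cograph component $C$, I would run a bottom-up dynamic program on its cotree. At each cotree node $t$ with subcograph $G_t$, the DP table is indexed by a constant number of bits recording, for the current partial solution $Y = X \cap V(G_t)$: (i) whether $Y$ is empty, (ii) whether $Y = V(G_t)$, (iii) whether $Y$ contains any vertex of $N_G(u)$, and (iv) whether every connected component of $G_t[Y]$ already contains a vertex of $N_G(u)$. Flag (ii) is crucial because the vertex cover constraint at a join node forces at least one child's entire vertex set to lie in $X$. Union nodes compose children states by disjoint union of their component structures. The delicate step is the join node: when both children contribute a nonempty part, the complete bipartite join edges collapse $Y$ into a single connected component whose ``goodness'' is determined by whether either child's partial solution already contained a vertex of $N_G(u)$; when only one side's $Y$ is nonempty, the vertex cover constraint forces the other side to be entirely in $X$ or empty, and the component structure of $G_t[Y]$ is inherited from the single nonempty side. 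Correctly tracking these cases under the ``every component is good'' constraint is the main place where care is needed in the proof, but the state space itself is of constant size per cotree node, so the overall DP runs in polynomial time.

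Combining the per-component solutions, $\{u\}$ together with the $X_C$'s yields a minimum connected vertex cover of $G$ containing $u$, establishing the ``furthermore'' part of the lemma. The first assertion, polynomial-time solvability of {\CVC} on $G$ in general, then follows by additionally handling CVCs that avoid $u$: such a cover must contain all of $N_G(u)$ and induce a connected subgraph of $G - u$, and therefore must lie entirely inside a single component $C$ of $G - u$ with $N_G(u) \subseteq V(C)$. This reduces to a forced-inclusion variant of the per-component subproblem above, which is solved analogously by the same split-graph enumeration and cograph cotree DP (with the extra forced vertices treated as pre-selected), and the final output is the smaller of the two candidates.
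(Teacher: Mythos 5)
Your proof is correct, and its skeleton matches the paper's: you branch on whether $u$ is in the cover, decouple the problem over the connected components of $G-u$ via the condition that every component of $G[X_C]$ must meet $N_G(u)$, and handle split components by enumerating the (at most one) excluded clique vertex --- this last part is essentially the paper's Case~1, which follows Krithika et al. Where you genuinely diverge is the cograph case: the paper does not give a combinatorial algorithm at all, but instead observes that cographs have rankwidth $1$, that adding one vertex keeps rankwidth bounded, and that the constrained \textsc{Connected Vertex Cover} question is $\textsf{MSO}_1$-expressible, so the meta-theorem of Courcelle, Makowsky and Rotics applies. Your explicit constant-state cotree dynamic program (tracking emptiness, fullness, whether $Y$ meets $N_G(u)$, and whether every current component does) is a more elementary and self-contained substitute, with the fullness flag correctly capturing the vertex-cover constraint at join nodes and the ``both sides nonempty $\Rightarrow$ everything merges'' observation correctly allowing a previously bad component to be redeemed; the price is that it does not generalize for free, whereas the paper's meta-theorem route is reused later for components of bounded cliquewidth. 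Two small points to tighten if you write this up fully: in the split-component repair step, the candidate repair vertex is one of $N_G(u)\cap I$ \emph{adjacent to} $K\setminus\{v\}$ (not one already ``lying in'' the big component), and any feasible solution of that branch must contain such a vertex, which is what justifies adding exactly one; and in the $u\notin X$ case the per-component condition changes from ``every component meets $N_G(u)$'' to genuine connectivity of $G[X_C]$, so your cotree DP needs one additional connectivity flag --- still constant state, but worth saying explicitly.
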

	
	\begin{proof} 
It is sufficient for us to provide a polynomial-time algorithm to solve inputs of {\CVC} on the class $\CC\CC(\cG)+1\cdot {\sf v}$ such that $\cG = \textsf{ Split} \cup  \textsf{Cograph}$.
	 Recall that the inputs $(G,u^\star)$ we are interested in, have the property that $G-u^\star\in \CC\CC(\cG)$ and we are only interested in smallest connected vertex covers of $G$. 
	
\begin{description}
		\item[Case 1:] 
		$\cG=\textsf{Split}$. Consider a graph $G$ in the class $\cG+1\cdot {\sf v}$ and let $u^\star$ denote a vertex in $G$ such that $G-u^\star$ is a split graph.
		The proof idea is similar to the way presented by Krithika et al.~\cite{KMR18}, but we provide it here for completeness.
		 Let $(G,u^\star)$ be the given instance of {\CVC} and let $(C,I)$ denote the partition of $V(G)\setminus \{u^\star\}$ into a clique and independent set respectively.
We will now construct a connected vertex cover $Z'$ by making a constant number of non-deterministic choices as follows.
Initialize $Z':=\{u^\star\}$ when we include $u^\star$ in the vertex cover, and otherwise we initialize $Z' = \emptyset$ (when we consider not to include $u^\star$ in the vertex cover).
Since any vertex cover must pick at least $|C|-1$ vertices from $C$, we guess whether or not $Z$ contains all of $C$ and in the latter case, we guess the unique vertex $q\in C\setminus Z$. 
In case, we decide not to pick $u^\star$ in the vertex cover, we pick its neighbors present in $C$ and $I$.
If the current set $Z'$ is a connected vertex cover of $G$, then we simply return it.
Otherwise, it must be the case that it has two different components.
They can be connected by using the only possible remaining vertex in $C$ that we might not have picked. So, we add that vertex to $Z'$ and check if that forms a connected vertex cover.
If that forms a connected vertex cover, then we return $Z = Z'$, or otherwise we simply return $\bot$ to indicate that the graph has no connected vertex cover. 

When $G - \{u^\star\}$ has more than one connected component, then $u^\star$ must always be there in any connected vertex cover.
In such case, let $C$ be a connected component of $G - \{u^\star\}$.
Consider any minimum connected vertex cover $A$ of $G$.
Consider $A_C = A \cap (V(C) \cup \{u^\star\})$.
$A_C$ must also be a vertex cover of $G[V(C) \cup \{u^\star\}]$.
Furthermore, $A_C$ also must be connected as well since $u^\star$ is a cut vertex in $G[A]$ also.
In such case, we compute a minimum connected vertex cover $A_C$ of each $G[V(C) \cup \{u^\star\}]$, and finally we put together to get a minimum connected vertex cover of $G$.

\smallskip

\item[Case 2:] $\cG=\textsf{Cograph}$. 
For the case of cographs, note that they are a subclass of distance-hereditary graphs and hence have rankwidth 1~\cite{Oum05}.
Therefore, {\CVC} is polynomial-time solvable on the class $\CC\CC(\cG)+1\cdot {\sf v}$ when $\cG=\textsf{Cograph}$, is a direct consequence of the fact that this problem is expressible in $\textsf{MSO}_1$, the result of Courcelle, Makowsky, and Rotics~\cite{CK09}, and the fact that adding a constant number of vertices to a graph of constant rankwidth keeps the rankwidth of the resulting graph constant~\cite{Oum05,CK09}.
In fact, by the same result of Courcelle, Makowsky, and Rotics~\cite{CK09}, we can find a smallest connected vertex cover containing $u^\star$ in polynomial-time. 
We note that this argument cannot be used in the previous case since split graphs can have arbitrarily large rankwidth~\cite{Oum05,CK09}. 
\end{description}

Suppose that $G - \{u^\star\}$ has one connected component being a split graph and another connected component being a cograph.
Since $G$ is connected, without loss of generality, we can assume that there are at least two connected components $C_1$ and $C_2$ of $G - \{u^\star\}$ such that there is an edge in $G[\{u^\star\} \cup V(C_1)]$ as well as in $G[\{u^\star\} \cup V(C_2)]$.
Then any (smallest) connected vertex cover of $G$ must contain $u^\star$.
In that case, for each connected component $C$, we first compute a minimum connected vertex cover of $G[\{u^\star\} \cup V(C)]$ containing $u^\star$.
Then, we compute their union to get a minimum connected vertex cover of $G$.
This completes the proof of the lemma.
\end{proof}

\begin{problem}[]{{\sc CVC-Split-Cograph-Deletion}}
  \Input & A graph $G$, a set $S \subseteq V(G)$ of size $k$ such that every connected component of $G - S$ is either a split graph or a cograph and an integer $p$. \\
  \Parameter & \hspace{2mm} $k$ \\
  \Prob & Does $G$ have a connected vertex cover of size at most $p$?
\end{problem}

For this problem, we assume the deletion set is provided in the input.
The formal definition of the parameterized optimization version of this problem is as follows:

\smallskip
$CVC(G, S,k,T) = \begin{cases} -\infty &\mbox{if } |S| > k \mbox{ or } G - S \notin \textsf{Split} \cup \textsf{Cograph}\mbox{,}\\
\infty &\mbox{if } T \mbox{ is not a connected vertex cover of }G\mbox{,}\\
|T| &\mbox{otherwise.} \end{cases}$

\smallskip

Let $(G, S, k)$ be an instance of {\sc CVC-Split-Cograph-Deletion}.
Let $\hat G$ be the graph constructed by identifying the vertex set $S$ into a single vertex $\hat s$.
We know from Lemma~\ref{Lem:characterization-of-Fd} that {\CVC} is polynomial-time solvable on $\hat G$.
Now, using Theorem~\ref{thm:cvc-G-deletion-theorem} and Lemma~\ref{Lem:characterization-of-Fd}, we prove Corollary~\ref{thm:diameter_main}.

\diametermaintheorem*

We have designed our kernelization algorithms in Sections~\ref{sec:cvc-general-result} and~\ref{sec:diameter} in such a way, that our lossy kernels can be unified under a single parameterization.
Hence, we consider a single parameterization, the {\em deletion distance} of the input graph to the class of graph where every connected component is {\em either} a treewidth-$\eta$ graph, {\em or} a chordal graph, {\em or} a cograph.
So, essentially we consider {\CVC} on $\CC\CC(\cG) + k\cdot {\sf v}$ when $\cG = \textsf{Treewidth-$\eta$-Graph} \cup \textsf{Chordal-Graph} \cup \textsf{Cographs}$.
We assume that the deletion set $S$ is given with the input.
This assumption is important as no $\OPT^{\OO(1)}$ polynomial-time approximation algorithm is known to find such a deletion set.

We know that the rankwidth and the cliquewidth of a graph are equivalent.
It means that, if a graph has bounded cliquewidth, then it also has bounded rankwidth and vice versa.
Therefore, we first prove the following lemma, which we will use in the proof of Theorem~\ref{thm:more-general-theorem-statement}.

\begin{lemma}
\label{lem:cvc-easy-solvability-mixed-parameter}
Let $\eta$ be a constant and $G$ be a connected graph having a vertex $u$ such that every connected component of $G - \{u\}$ is either a chordal graph, or a graph with cliquewidth $\eta$.
Then, {\CVC} is polynomial-time solvable on $G$.
\end{lemma}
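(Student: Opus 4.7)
My plan is to follow the same decomposition strategy as in Lemma~\ref{Lem:characterization-of-Fd}: since the single vertex $u$ separates $G$ into the connected components $C_1, \ldots, C_m$ of $G - \{u\}$ and $G$ is connected, $u$ has at least one neighbor in every $C_i$. When $m \geq 2$, any connected vertex cover of $G$ is forced to contain $u$ (otherwise a cover whose support intersects two different $C_i$'s would be disconnected), so a smallest CVC of $G$ is obtained by independently computing, for each $i$, a smallest connected vertex cover of $G[\{u\} \cup V(C_i)]$ subject to the constraint that $u$ lies in it, and then taking the union with the copies of $u$ identified. When $m = 1$ there is only one subproblem and we simply try both options (with or without $u$) and take the better. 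Thus the lemma reduces to a single subroutine: on either a chordal component or a cliquewidth-$\eta$ component $C$, augmented by the extra vertex $u$ attached to some subset $N_u \subseteq V(C)$, compute in polynomial time a smallest connected vertex cover that contains $u$.

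For a component $C$ of cliquewidth at most $\eta$, the graph $G[\{u\}\cup V(C)]$ still has cliquewidth bounded by a constant depending only on $\eta$, since one can extend any $\eta$-expression for $C$ by introducing $u$ with a fresh label and appropriate joins. The property ``$Z$ is a connected vertex cover of the graph and $u \in Z$'' is expressible in $\textsf{MSO}_1$, so the Courcelle--Makowsky--Rotics meta-theorem~\cite{CK09} immediately gives a polynomial-time algorithm for this subroutine on the cliquewidth side, exactly as in the cograph case of Lemma~\ref{Lem:characterization-of-Fd}.

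The main obstacle will be the chordal side, since chordal graphs have unbounded cliquewidth in general and the $\textsf{MSO}_1$ machinery is not available. Here my plan is to first observe that once $u$ is forced into the cover, all edges of $G[\{u\}\cup V(C)]$ incident with $u$ are automatically covered, so the subroutine collapses to ``find a smallest vertex cover $Z$ of $C$ such that every connected component of $C[Z]$ contains a vertex of $N_u$''. I would solve this constrained problem by a bottom-up dynamic program on a clique tree decomposition of $C$, which exists and is computable in polynomial time by Theorem~\ref{lemma:chordal-clique-tree}. Because each bag is a clique, any vertex cover must exclude at most one vertex of the bag, so the state at a bag need only record which (at most one) bag vertex is left out together with a partition of the included bag vertices recording which of them have already been merged, through the processed subtree, into a component that has seen a vertex of $N_u$; this yields only polynomially many states per bag, and the standard introduce/forget/join updates combine in polynomial time because inside a clique every pair of included vertices is adjacent and hence automatically in the same local component. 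Reading the optimum at the root and combining with the cliquewidth-based algorithm of the previous paragraph then produces the claimed polynomial-time algorithm for \CVC\ on $G$.
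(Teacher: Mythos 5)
Your overall route is essentially the paper's: force $u$ into the cover via the cut-vertex argument, split the components by type, handle the bounded-cliquewidth side with the Courcelle--Makowsky--Rotics meta-theorem, and handle the chordal side with a polynomial-state dynamic program over the clique tree, where the polynomial bound comes from the fact that a vertex cover excludes at most one vertex per clique bag. The only substantive difference is in how the chordal subroutine is set up: the paper adds $u$ to every bag of the clique tree decomposition and invokes the Escoffier--Gourv\`es--Monnot equivalence between computing a minimum connected vertex cover and computing one containing $u$, whereas you keep $u$ out of the bags and instead encode the constraint that every component of $C[Z]$ meets $N_u$. Both work for the same reason, and your formulation is arguably a bit more self-contained.

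There is one concrete gap. Your sentence ``the lemma reduces to a single subroutine: compute a smallest connected vertex cover that contains $u$'' is not true in the $m=1$ case when the unique component $C$ is chordal: the optimum of $G$ may avoid $u$ entirely (e.g., $u$ a leaf of a star), and then the answer is a smallest \emph{connected} vertex cover $Z$ of $C$ with $N_u \subseteq Z$, which is a different constrained problem from the one your DP solves (there you only require each component of $C[Z]$ to meet $N_u$, and you never require $C[Z]$ itself to be connected). You do say you would ``try both options,'' but you never supply an algorithm for the second option. The fix is routine with the machinery you already have --- run the same clique-tree DP forcing $N_u$ into $Z$ and accepting only states in which all forgotten components have merged into a single one --- but as written the case is not covered. (When $C$ has bounded cliquewidth this issue does not arise, since $G$ itself then has bounded cliquewidth and plain \CVC{} is $\textsf{MSO}_1$-expressible.)
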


\begin{proof}
First, we partition the set of connected components of $G - \{u\}$ as follows.
Let
\begin{itemize}    
    \item $\cD_1$ be those connected components of $G - u$ that are chordal graphs, and
    
    \item $\cD_2$ be those connected components of $G - u$ that have cliquewidth at most $\eta$ but are not in ${\cD}_1$.
\end{itemize}

For every $i \in \{1, 2\}$, we denote by $G_i$ the graph induced by the vertex set spanned by the connected components in $\cD_i$ plus the vertex $u$.
As $G$ is connected, without loss of generality, we can assume that both the graphs $G_1$ and $G_2$ have an edge.
So, $u$ is part of any (optimal) connected vertex cover of $G$.
For every $i = \{1, 2\}$, let $B_i$ be a smallest connected vertex cover of $G_i$ that contains $u$.
We define $\cH$ by the class of all connected graphs that contain a vertex whose removal results in a chordal graph.
Note that $G_1 \in \cH$.
Suppose that a new vertex $w$, and an edge $wv$ such that $v \in V(G_1)$ are added to $G_1$ (called {\em pendant addition} by Escoffier et al.~\cite{EscoffierGM10}).
Even then also $G_1 - \{u\}$ is a chordal graph.
So, $\cH$ is closed under pendant addition.
So, each of the biconnected components of $G_1$ also has one vertex whose deletion results in a chordal graph. 
Hence, all the biconnected components of $G_1$ are also in the graph class $\cH$.
It is known due to Lemma 4 of Escoffier et al.~\cite{EscoffierGM10} that finding a minimum a connected vertex cover and finding smallest connected vertex cover containing $u$ are polynomially equivalent in $G_1$.
Now, we explain how {\CVC} is polynomial-time solvable on $G_1$.
The idea is standard and goes along the line of the proof of Theorem 23 by Krithika et al.~\cite{KrithikaMR17}.
It is known due to Theorem~\ref{lemma:chordal-clique-tree} that every chordal graph admits a clique tree decomposition that can be obtained in polynomial-time.
Now, adding $u$ to every bag gives a valid tree decomposition of $G_1$.
Now, we convert this into a nice tree decomposition of $G_1$.
Let $\cT = (T, \{X_t\}_{t \in V(T)})$ be a nice tree decomposition of $G_1$.
Every bag of this nice tree decomposition $\cT$ has at most one vertex whose deletion results in a clique.
Consider a standard dynamic programming over a rooted nice tree decomposition $\cT$ of $G_1$.
For a node $t$ of $T$, let $T_t$ be the subtree rooted at $t$ and $V_t$ denotes the union of all bags rooted at $t$.
Furthermore, let $H_t$ be the subgraph of $G_1$ induced by the vertex set $V_t$.
For a node $t$ of $T$, and $X \subseteq X_t$, and a partition $\mathcal{P} = \{P_1,\ldots,P_q\}$ of $X$ into at most $|X|$ parts, let $VC[t, X, \mathcal{P}]$ denote a minimum vertex cover $Y$ of $H_t$ such that $Y \cap X_t = X$, and $H_t[Y]$ has exactly $q$ connected components $C_1,\ldots, C_q$ where for all $i \in [q], P_i = V(C_i) \cap X_t$.
Moreover, if $X = \emptyset$, then $Y$ is required to be connected in $H_t$.
So, consider the root node $r$ of a nice tree decomposition.
We know that $X_r = \emptyset$.
Now, $VC[r, \emptyset, \{\emptyset\}]$ is a minimum connected vertex cover of $G_1$.
But, every bag has one vertex whose deletion results in a clique.
So, any connected vertex cover of $G_1$ can avoid at most two vertices from $X_t$ where $t \in V(T)$ (from every bag of $\cT$).
Hence, the total number of valid states per node is $|V(G_1)|\cdot |X_t|^{\OO(1)}$.
So, each entry can be computed in polynomial-time.
Hence, {\CVC} is polynomial-time solvable on $G_1$.
As {\CVC} and finding a smallest connected vertex cover containing $u$ are polynomially equivalent in $G_1$, we can find $B_1$ in polynomial-time.

We know that $\cw (G_2) \leq \eta + 1$, we know due to Courcelle's Theorem~\cite{Courcelle1990,Courcelle90} that {\CVC} is polynomial-time solvable on $G_2$.
In fact, due to the same result by Courcelle~\cite{Courcelle1990,Courcelle90}, a smallest connected vertex cover containing $u$ can also be computed in polynomial-time in $G_2$. 
So, $B_2$ can be found in polynomial-time.

Hence, $B_1 \cup B_2$ is an optimal connected vertex cover of $G$.
In this process, we can solve {\CVC} in polynomial-time in $G$.
\end{proof}

Note that a graph with treewidth $\eta$ has cliquewidth at most $3\cdot 2^{\eta - 1}$~\cite{CornelRotics2006}.
Since cographs have bounded rankwidth, it has bounded cliquewidth too.
Lemma~\ref{lem:cvc-easy-solvability-mixed-parameter} gives a guarantee that {\CVC} is polynomial-time solvable on any graph in the graph class $\CC\CC(\cG) + 1\cdot {\sf v}$ with $\cG = \textsf{ \tw-$\eta$-Graph } \cup \textsf{Chordal-Graph } \cup \textsf{Cograph}$. 
Now, using the above lemma (Lemma~\ref{lem:cvc-easy-solvability-mixed-parameter}) and Theorem~\ref{thm:cvc-G-deletion-theorem}, we give a proof of Theorem~\ref{thm:more-general-theorem-statement}.

\FinalResultMain*

\begin{proof}
Let $\hat G$ be the graph constructed by identifying the vertex set $S$ into a single vertex ${\hu}$.
Let $C$ be an arbitrary connected component of $G - S$ that has treewidth at most $\eta$.
Due to Cornell and Rotics~\cite{CornelRotics2006}, we know that if $\tw (C) \leq \eta$, then $\cw (C) \leq 3\cdot 2^{\eta - 1}$.
So, we have that every connected component of $G - S$ is either a cograph or a graph with cliquewidth at most $3\cdot 2^{\eta - 1}$ or a chordal graph.
Since cographs also have constant cliquewidth, we know that every connected component of $G - S$ is either a graph of bounded cliquewidth or a chordal graph.
We know from Lemma~\ref{lem:cvc-easy-solvability-mixed-parameter} that there exists a polynomial-time algorithm that constructs an optimal connected vertex cover of $\hat G$.
Now, using Theorem~\ref{thm:cvc-G-deletion-theorem}, we know that {\CVC} parameterized by $k$ admits a time efficient {\sc PSAKS} with $k^{\OO(1/{\eps})}$ vertices.
This completes the proof of our final result.
\end{proof}

\section{Conclusion}
\label{sec:conclusion}

In this paper we obtained the first polynomial size approximate kernelization schemes for the {\CVC} problem when parameterized by the deletion distance of the input graph to the class of cographs, the class of bounded treewidth graphs, and the class of chordal graphs. Moreover, they are in fact {\em time efficient} PSAKSes and this raises the natural question of whether one can obtain a {\em size efficient} PSAKS for {\sc Connected Vertex Cover} even when parameterized by solution size. The output of a size efficient PSAKS is required to be bounded by $f(\eps)~k^{\bigoh(1)}$ instead of $k^{f(\eps)}$. 

We designed our kernelizations in such a way as to ensure that our results have been unified under a single parameterization, the deletion distance of the input graph to the class of graphs where every connected component is \emph{either} a treewidth-$\eta$ graph {\em or} a chordal graph {\em or} a cograph. 

This has allowed our framework to capture graph classes which are significantly more general than classes which have a small deletion distance to {\em only one} of $\{\tw\mbox{-}\eta,{\sf Chordal},{\sf Cographs}\}$.
For instance, consider the graph $H$ obtained by taking the disjoint union of $n/2$ cycles of length $5$ each and $n/2$ cliques of size $4$ each. Observe that the deletion distance of $G$ to any one of $\{\tw\mbox{-}2,{\sf Chordal}\}$ is at least $n/2$. On the other hand, the deletion distance of $H$ to the class of graphs where every connected component is \emph{either} a treewidth-$2$ graph {\em or} a chordal graph is 0.
Our framework thus allows one to obtain a $(1+\varepsilon)$-approximate kernel of {\em constant size} for {\CVC} on $H$ since $k=0$. 

As a final remark, we point out that in order to generalize our results in this way for parameterization by deletion distance to $\{\tw\mbox{-}\eta,{\sf Chordal},{\sf Cograph}\}$ {\em even} in the absence of the deletion set in the input, one must first design a polynomial-time factor-$\OPT^{\bigoh(1)}$ approximation algorithm to compute such a deletion set. We leave this as an interesting problem for future research. Such an algorithm would have interesting implications in the study of graph modification problems.

%%
%% Bibliography
%%

%% Please use bibtex (recommended), 
%\bibliography{main-file,references,appx,references-2,references-3}
%\bibliography{main-file,references,appx}

\end{document}